\def\citeN{\citet*}
\def\D{\mathrm{d}}
\DeclareMathOperator*{\argmin}{argmin}
\newtheorem{heuristic}{Heuristic}
\def\theX{\widehat{X}}\def\theX{X} 
\def\theY{\widehat{Y}}\def\theY{Y}\def\theY{\psi}\def\theY{\xi_{i:n}}\def\theY{\xi}
\def\theI{\widehat{I}}\def\theI{I}\def\theI{Y} 
\def\theE{\widehat{E}}\def\theE{E}\def\theE{\Pi}
\def\thenb{N_{\text{steps}}^{\text{c}}} \def\thenb{n_c}\def\thenb{n_b}\def\thenb{n}
\def\theh{L} \def\theh{h}
\def\cB{\mathcal{B}}  
\def\coloneqq{=}
\def\eqqcolon{=}
\def\thef{\wideahat{f}}\def\thef{f}
\def\thee{e}
\def\lefta{} 
\def\righta{}
\def\supi{^i}\def\supi{}
\def\b{}\renewcommand\b[1]{{\color{blue}#1}}
\def\finproof {\rule{4pt}{6pt}}
\def\mx{N}
\def\my{M}
\def\tpi{\tp_i } \def\tpi{i}
\def\tjp{\tjp }\def\tjp{j+1}
\def\sj{\sj }\def\sj{j}
\def\si{\si }\def\si{i}
 \def\Var{\mathbb{V}\mathrm{ar}}
\newtheorem{theorem}{Theorem}
\newtheorem{lem}{Lemma}
\newtheorem{pro}{Proposition}
\newtheorem{cor}{Corollary}
\newtheorem{conj}{Conjecture}
\newtheorem{rem}{Remark}
\newtheorem{com}{Comments}
\newtheorem{ex}{Example}
\newtheorem{nota}{Notation}
\newtheorem{defi}{Definition}
\newtheorem{hyp}{Assumption}
\newcommand{\bt}{\begin{pro}}\newcommand{\et}{\end{pro}} 
\newcommand{\bl}{\begin{lem}}\newcommand{\el}{\end{lem}}
\newcommand{\bp}{\begin{pro}}\newcommand{\ep}{\end{pro}}
\newcommand{\bcor}{\begin{cor}}\newcommand{\ecor}{\end{cor}}
\newcommand{\bconj}{\begin{conj}}\newcommand{\econj}{\end{conj}}
\newcommand{\bd}{\begin{defi} \rm }\newcommand{\ed}{\finproof\end{defi} }
\newcommand{\eds}{\end{defi} }
\newcommand{\brem }{\begin{rem} 
}\newcommand{\erem }{
\end{rem}}
\newcommand{\bcom}{\begin{com} \rm }\newcommand{\ecom }{\end{com}}
\newcommand{\brems }{\begin{rem} 
}
\newcommand{\erems }{\end{rem}}
\newcommand{\bex}{\begin{ex} 
}\newcommand{\eex}{\finproof\end{ex}}
\newcommand{\bno}{\begin{nota} \rm }\newcommand{\eno}{\finproof\end{nota}}\newcommand{\enos}{\end{nota}}
\newcommand{\eexs}{\end{ex} }
\newcommand{\bhyp}{\begin{hyp} \rm }\newcommand{\ehyp}{\end{hyp}}
\newcommand{\bP}{\begin{proof}}
\newcommand{\eP}{\end{proof}}
\def\proof{\noindent {\it Proof. $\, $}}\def\proof{\noindent{\it\textbf{Proof}. $\, $}}
\def\finproof {\hfill $\Box$ \vskip 5 pt }\def\finproof{\rule{4pt}{6pt}}
 \newcommand{\beqa}{\begin{eqnarray}}
\newcommand{\eeqa}{\end{eqnarray}}
 \newcommand{\bea}{\begin{eqnarray*}}
\newcommand{\eea}{\end{eqnarray*}}
\def\bal{\begin{aligned}}
\def\eal{\end{aligned}}
\def\bll#1{
\beqa\label{#1}\bal
}\def\lel{\eal\eeqa}
\newcommand{\beql}[1]{\bll{#1}}
\newcommand{\eeql}{\lel}
\newcommand{\bel}{\bea\bal}
\newcommand{\eel}{\end{aligned}\eea}
\def\eee{\end{document}}
\newcommand{\op}[1]{#1}
\newcommand{\tmop}[1]{\ensuremath{\operatorname{#1}}}
\newcommand{\tmtextit}[1]{\emph{#1}}
\def\coloneqq{\coloneqq}
\def\coloneqq{:=}\def\coloneqq{=}
\def\sp{\,,\;}
\def\thestar{\thestar}\def\thestar{*}
\def\oversimulation{hierarchical simulation\xspace}
\def\Oversimulation{Hierarchical simulation\xspace}
\def\OverSimulation{Hierarchical Simulation\xspace}
\def\ie{\itshape i.e.}\def\ie{i.e.\xspace} 
\def\kappa{\b{P}}\def\kappa{P}
\def\Q{\mathbb{P}}\def\Q{\mathbb{Q}}
\def\Psi{B}
\def\ind{\mathds{1}}
\begin{document}
\title{Pathwise CVA Regressions With Oversimulated Defaults\footnote{A GPU implementation of our hierarchical simulation and regression learning scheme, as well as single-file python notebook demo for our CVA use case, 
are available   
on https://github.com/BouazzaSE/NeuralXVA.}} 

	\author{Lokman A. Abbas-Turki\textsuperscript{1},  St\'ephane Cr\'epey\textsuperscript{2} and Bouazza Saadeddine\textsuperscript{3,4}}

{\let\thefootnote\relax\footnotetext{\textsuperscript{1} \textit{Sorbonne Universit\'e, Laboratoire de Probabilit\'es et Mod\`eles Al\'eatoires, CNRS UMR 8001, Paris, France}}}

{\let\thefootnote\relax\footnotetext{\textsuperscript{2} \textit{Universit\'e Paris Cit\'e, Laboratoire de Probabilit\'es et Mod\`eles Al\'eatoires, CNRS UMR 8001, Paris, France  (\textbf{corresponding author}, {\tt stephane.crepey@lpsm.paris})}}}

{\let\thefootnote\relax\footnotetext{\textsuperscript{3} \textit{Universit\'e Paris-Saclay, Univ Evry, Laboratoire de Mathématiques et Modélisation d'Évry, CNRS UMR 8001, Evry, France}}} 

{\let\thefootnote\relax\footnotetext{\textsuperscript{4} 
\textit{Crédit Agricole CIB, Quantitative Research GMD/GMT, Paris, France.}}}


{\let\thefootnote\relax\footnotetext{{\it Acknowledgement:} This article has been accepted for publication in  Mathematical Finance, published by Wiley. We thank an anonymous referee for his useful comments. This research has benefited from the support of the Chair \textit{Capital Markets Tomorrow: Modeling and Computational Issues  under the aegis of the Institut Europlace de Finance},  a joint initiative of Laboratoire de Probabilités, Statistique et Modélisation (LPSM) / Université Paris Cité and  Crédit Agricole CIB. }}

\date{\today}

\maketitle

\rm
\begin{abstract} 
We consider the computation by simulation and neural net regression of conditional expectations, or more general elicitable statistics, of functionals of 
processes $(\theX,\theI)$. 
Here an exogenous component $\theI$ (Markov by itself) is time-consuming to simulate, while the endogenous component $\theX$ (jointly Markov with $\theI$) is  
quick to simulate given $\theI$, but is responsible for most of the variance of the simulated payoff.  
To address the related variance issue,
we introduce a conditionally independent, hierarchical simulation scheme, where several paths of $X$ are simulated for each simulated path of $Y$.
We analyze the statistical convergence of the regression learning scheme based on such block-dependent data.
We derive heuristics on the number of paths of $Y$ and, for each of them, of $X$, that should be simulated. 
The resulting algorithm is implemented on a graphics processing unit (GPU) combining Python/CUDA and learning with PyTorch. 
A CVA case study with a nested Monte Carlo benchmark shows that the \oversimulation technique is key to the success of the learning approach. 
\end{abstract}

\def\keywordname{{\bfseries Keywords:}}
\def\keywords#1{\par\addvspace\baselineskip\noindent\keywordname\enspace
\ignorespaces#1}\begin{keywords} 
\oversimulation,
neural net regression,
machine learning, 
X-valuation adjustment (XVA).
\end{keywords}
 
\vspace{2mm}
\noindent
\textbf{Mathematics Subject Classification:} 
91B25, 
91G40, 
62G08, 
62M45, 
68Q32. \\



\section{Introduction\label{s:intro}}

\subsection{Financial Motivation}
Greensill defaulted on March 8, 2021, a collapse estimated by  British  parliamentarians
to trigger a cost for UK taxpayers of up to £5bn\footnote{cf.~https://www.theguardian.com/business/2021/apr/28/greensill-collapse-could-cost-uk-taxpayer-up-to-5bn-mps-told, accessed on May 16, 2022.}.
Greensill fell short of capital because they lent to Gupta against future invoices which then did not materialize.  
A projection of  Greensill's capital requirements
including the
tail risk related to Gupta's default
would have highlighted a sizable concentrated and unsecured credit risk to a junk-rated counter-party. This example emphasizes the importance of 
performing proper
 default risk simulations, as opposed to credit spread simulations simply, as typically done in the industry for the sake of simplicity.
Another example (in fact, the motivation for this work) is the path-wise XVA regression approach introduced in \citeN{CrepeyHoskinsonSaadeddine2019}, 
which also requires
a hybrid market and credit setup, where the actual defaults of the clients of the bank are simulated.

However, to obtain the 
required level of accuracy in a simulation setup with defaults, one needs a very large number of simulations---100 times more, say, than in a purely diffusive model, where 100 is in reference to credit spreads that would be of the order of 1\%. A factor 100 is not necessarily a sizable amount as far as the simulation of the risk factors is involved. But it also means that the mark-to-market cube of path-wise prices of all trades of the bank becomes 100 times bigger. When applied at the level of a realistic banking portfolio, this becomes prohibitive in terms of both computation time and memory occupancy.
 
\subsection{Contribution of the Paper}
To overcome this problem, we introduce an acceleration technique for the computation by simulation and regression of conditional expectations of functions $\theY$ of Markov pairs $(\theX,\theI)$, where $\theI$ is an exogenous component,  Markov by itself, whose simulation is time-consuming, while the endogenous component $\theX$ (jointly Markov with $\theI$) is quick to simulate given $Y$, but also responsible for
most of the variance of the simulated payoff. The idea, which we call \oversimulation, 
is then to draw an optimized number of realizations of $X$ conditional on each simulation of $Y$. For example, in the above-mentioned 
 XVA regression framework, we simulate a few hundred paths of client defaults conditionally on each mark-to-market path. Proceeding in this way, 
the computational burden of the mark-to-market cube is not amplified by the simulation of the client defaults. We demonstrate, both mathematically and empirically, that the lack of independence of the ensuing simulation setup is not detrimental to the quality of the
ensuing learner,
i.e.~(in the above case) of the regressions of the XVA layers built over the mark-to-market cube and defaults scenarios. In addition,
an a posteriori twin Monte Carlo validation technique is introduced to estimate the $L_2$ error between a targeted conditional expectation and any estimator for the latter. 
 
\subsection{Related literature}

{Supervised learning tasks can be subdivided in two main categories \citep{Murphy12}. The first one is classification problems, for which training a neural net allows learning a (conditional) probability that a discrete random variable takes its different possible values (such as ‘cat or dog’ for an image), by empirical minimization of an entropic criterion \footnote{or Kullback-Leibler divergence, or maximization  of a likelihood.},  based on a number of labeled observations (images and their correct classification, e.g. `cat or dog'). The second category consists of regression tasks, for which  training a neural net allows learning a conditional expectation  $\mathbb{E}(V | U)$, by empirical minimization of a least squares criterion based on observations of the pair of random variables $(U,V)$. Our paper falls in the second category,  in the special case where the $(U,V)$ are simulated, assuming the data generating process known. For this purpose we rely, as is standard, on the $L_2$ training error and on out-of-sample validation, represented in our case by companion twin and nested Monte Carlo procedures.}

In the last years, such neural net regression-based simulation techniques have rapidly imposed themselves as a worthy player in the field of the numerical methods for PDEs and BSDEs: see e.g.~\citeN{EHanJentzen17} or \citeN{HurePhamWarin19} (to quote only two). 
There are good reasons for this evolution, starting with universal approximation theorems, or density results justifying the use of neural network as a versatile parameterization: see \citep{kidger2020universal} for the fixed-width case and \citep{hornik1991approximation, cybenko1989approximation} for the fixed-depth case. Other incentives for using neural networks are the ability to automatically infer a linear regression basis (provided by the trained hidden layers\footnote{see Figure \ref{fig:NN}.}), or the ease of transfer learning \citep{pan2009survey, bozinovski2020reminder}. However, the control of the error arising from the numerical optimizations for the embedded training tasks (a priori error bounds for \eqref{e:regrkappo} in our setup) is a largely open issue: {in the nonconvex numerical optimization case typical of neural net training tasks, only idealized versions of stochastic gradient descents with over-parameterized neural nets can be shown to converge to the global minimum of the empirical loss, and only in-sample \citep*{chizat2018global,du2019gradient}; Otherwise, only local minima can be guaranteed, under assumptions such as the ones in~\citeN{lei2019stochastic}. These local minima are often close to global minima \citep*{choromanska2015loss}, so that this local vs. global minimization issue is considered by many as a false problem in  practice.}
Still, our aforementioned twin Monte Carlo validation procedure is a useful practical contribution in this regard.

On the XVA side,
the use of regression-based Monte Carlo simulations is not a novelty  by itself. It was already presented in \citeN{cesari2010modelling}
 as a key CVA computational paradigm, intended to avoid nested Monte Carlo. However, from such traditional XVA computations to the neural net regressions of \citeN{HugeSavine20},
 the regressions are only used for computing the mark-to-market cube of the prices of all the contracts of the bank with all its clients (or netting sets) at all times of a simulation time-grid, out of which the CVA of the bank at time 0 (and only it) is obtained by integration of the so-called expected positive exposure relative to each netting set against the credit curve of the corresponding client, and summation over netting sets. By contrast, in this paper, we aim at learning the CVA as a process, i.e.~at every node of a simulation for all risk factors, based on a mark-to-market cube computed by model analytics at the forward simulation stage. Regressions could also be used here, but these would be more standard, hence we ignore them in this paper: regressions for the mark-to-market of derivatives à la \citeN{cesari2010modelling} 
  are typically
 multiple parametric regressions in diffusive and low-dimensional setups, as opposed to hybrid diffusive / Markov chain setup and high-dimensional neural net regressions in the path-wise CVA case targeted in this work. Recently, \citeN{gnoatto2020deep} 
 deep-hedge and learn the CVA and the FVA, but this is again in a purely diffusive setup, after the default of the bank and its (assumed single) counterparty have been eliminated from the model by the reduction of filtration technique of \citeN{Crepey13a1}. This technique of reduction of filtration is not extendible to the realistic case of a bank involved in transactions with several (in practice, many, e.g.~several thousands of) clients, the default times of which enter the ensuing FVA (and KVA) equations in a nonlinear fashion, so that there is then no other choice but simulating these defaults and including them in the training. But this requires special care, which is the topic of this work.  
 
\subsection{Outline}

In Section~\ref{s:setup}, we introduce a neural net
learning framework for 
conditional expectations, iterated in time as they appear naturally in dynamic pricing problems,
taking into account the 
dynamics of the problem by means of a
backward pricing algorithm.   
A twin Monte Carlo validation technique is also introduced to estimate the error of the estimator.
In Section~\ref{s:hierarch}, we identify a variance issue raised by the coexistence of risk factors evolving at different paces in the problem (e.g. market risk versus default indicator processes)
and we propose a \oversimulation approach to address it. 
We establish the benefit of this approach mathematically by providing associated generalization bounds. 
Section~\ref{sec:cva} illustrates our approach with a CVA numerical case study.

\brem\label{rem:elici} Although our CVA case study only covers quadratic risk minimization (for benchmarking reasons), the approach and the proofs of this paper are valid for more general loss functions and apply to the learning of any elicitable statistics.
In particular, via the \citeN{rock:urya:00} representation of value-at-risk and expected shortfall of a given loss (random variable) in terms of ``far out-of-the-money call options'' on that loss, our hierarchical simulation approach is also relevant for learning value-at-risk and expected shortfall 
in hybrid mark-to-market and default simulation setups.
 Such an approach is even particularly relevant in these cases, where the fact that $\theX$ is responsible for most of the variance of the payoff is then intrinsic to the far out-of-the-money  feature of the corresponding ``option''.
 \erem

\section{Neural 
Regression Setup}\label{s:setup}

A reference probability space, with corresponding probability measure and expectation denoted by $\Q$ 
and $\mathbb{E}$, is fixed throughout the paper. The state spaces of $\theX$ and $\theI$ are taken as  $\mathbb{R}^p$ and $\mathbb{R}^q $, for some positive integers $p$ and $q$. We identify
$\mathbb{R}^{p+q}$ with $\mathbb{R}^p\times\mathbb{R}^q$ and write $\phi(z)$ or $\phi(x, y)$  interchangeably, where $z$ is the concatenation of $x$ and $y$, for every function $\phi$ defined over
$\mathbb{R}^{p+q}$  or $\mathbb{R}^p\times\mathbb{R}^q$, and for every $x\in\mathbb{R}^p$ and $y\in\mathbb{R}^q$.

In the (default risk)
case of a Markov chain like component $X$, 
referred to hereafter as the Markov chain $X$ case (but with transition intensities modulated by $Y$), we assume, without loss of generality in this case,
that $X$ evolves on the vertices $\{0,1\}^p$ of the unit cube in $\mathbb{R}^p$.
We take the problem after discretisation of time (if the latter was continuous in the first place), for a time step set to one year for ease of notation.

We then consider 
$(X_i)_{0\leq i\leq n}$ and $(Y_i)_{0\leq i\leq n}$  
as discrete-time processes on the 
time grid. 
Our goal is to estimate, for every $\tpi  ,$
conditional expectations of the form 
\begin{equation}\label{cond_exp_discrete} \Pi_i = \mathbb{E}[\xi_{i,n} \supi \left|X_i, Y_i\right.],\end{equation}
where
\begin{equation}\label{cond_exp_discretebis} \xi_{i,n}  \coloneqq f_i
(X_i, \dots, X_n, Y_i, \dots, Y_n).
\end{equation}
Here $\thef_i$ is  a measurable real function
 such that $\xi_{i,n}$ is a square-integrable random variable. 


Conditional expectations such as \eqref{cond_exp_discrete} can be estimated via linear regression using a finite sample. This is ubiquitous in quantitative finance since the Bermudan Monte Carlo papers of \citeN{tsitsiklis2001regression} and \citeN{longstaff2001valuing}. 
In order to estimate the conditional expectation in \eqref{cond_exp_discrete}, one draws {i.i.d.} samples 
$\{(X_i^{\iota}, Y_i^{\iota}, \xi_{i,n}^{\iota})\}_{\iota\in\mathcal{I}}$ of $(X_i, Y_i, \xi_{i,n} )$, where $\mathcal{I}$ is a finite set of indices. Then, given a feature map $\phi:\mathbb{R}^{p+q}\rightarrow\mathbb{R}^m$ (for some positive integer $m$), 
one linearly regresses $\{\xi_{i,n}^\iota\}_{\iota\in\mathcal{I}}$ against $\{\phi(X_i^\iota, Y_i^\iota)\}_{\iota\in\mathcal{I}}$,~solving for 
\beql{e:forw}\hat{w}_i\in\argmin_{w_i \in\mathbb{R}^m} \sum_{\iota\in\mathcal{I}}(\xi_{i,n}^\iota - w_i^{\top}\phi(X_i^\iota, Y_i^\iota) )^2.
\eeql 
One then uses $\hat{w}^{\top}_i\phi(\theX_{\tpi }, \theI_{\tpi }) $ 
as an approximation for $\theE  _{\tpi }$.

The above procedure is justified by the characterization, in the square integrable case, of conditional expectations as orthogonal projections, i.e.~\[\mathbb{E}[\xi_{i,n} \left|\theX_{\tpi }, \theI_{\tpi }\right.]=\varphi^{\star}_i(\theX_{\tpi }, \theI_{\tpi })\quad\text{a.s.},\]
where, denoting by $\cB(E)$ the set of Borel measurable real functions on a
metric
space $E$, 
\begin{equation}\label{e:genregr}
 \varphi^{\star}_i\in\argmin_{\varphi_i\in\cB(\mathbb{R}^p {\times\mathbb{R}^q})} \mathbb{E}[(\xi_{i,n} -\varphi_i(\theX_{\tpi }, \theI_{\tpi }))^2].\end{equation}
One recovers the linear regression formulation \eqref{e:forw} by approximating the expectation by an empirical mean and restricting the search space to the functions of the form $\mathbb{R}^{p+q} \owns (x,y ) \mapsto
w_i^{\top}\phi(x, y)$, where $w_i\in\mathbb{R}^m$.

\subsection{Neural Net Parameterization}

Linear regression by means of a priori, explicit factors has a reasonable chance of success when $\varphi^{\star}_i$ is simple enough and the feature mapping $\phi$ can be judiciously chosen, usually from expert knowledge. This is however not always the case, e.g.~when considering portfolio-wide XVA metrics, which exhibit non-trivial dependencies on the many risk factors being regressed against. It is then impossible to manually devise a satisfactory feature mapping $\phi$.
Figure~\ref{fig:NNvsLinearCall} shows how a linear regression with the raw risk factors as features fails in learning the (conditional) CVA of an elementary portfolio made of a single call option, while the neural net estimator almost matches with the nested Monte Carlo estimator (see Section \ref{sec:cva} for more numerical details).
\begin{figure}
\begin{minipage}{.49\textwidth}
 \center
\vspace{-3mm} \includegraphics[width=\textwidth]{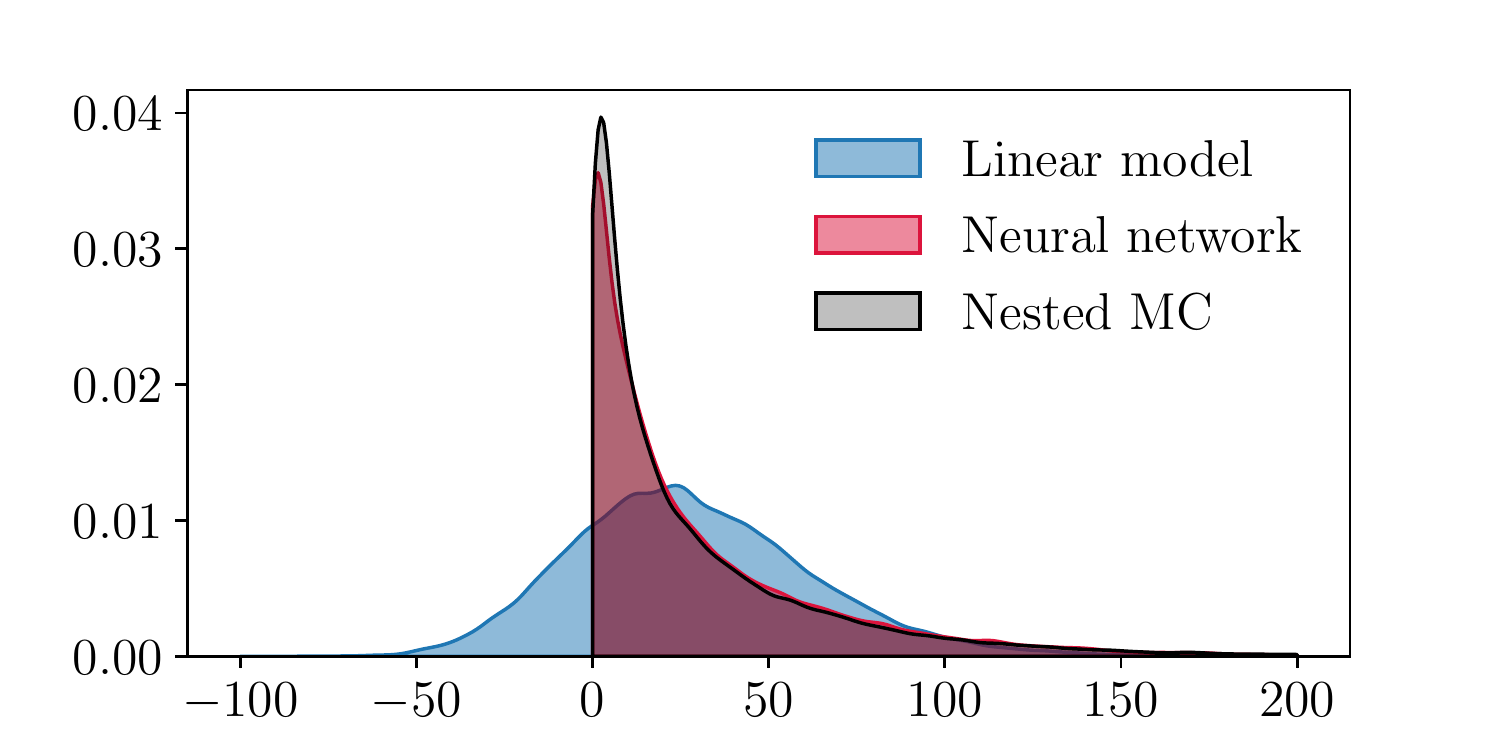}
 \caption{Density plot of 
 the
 CVA 
 of a vanilla call, at mid-life of the option.~~~~~~~~~~~~~~~~~~~~~~~}
 \label{fig:NNvsLinearCall}
\end{minipage}\hfill%
\begin{minipage}{.47\textwidth}
 \center
 \includegraphics[width=\textwidth]{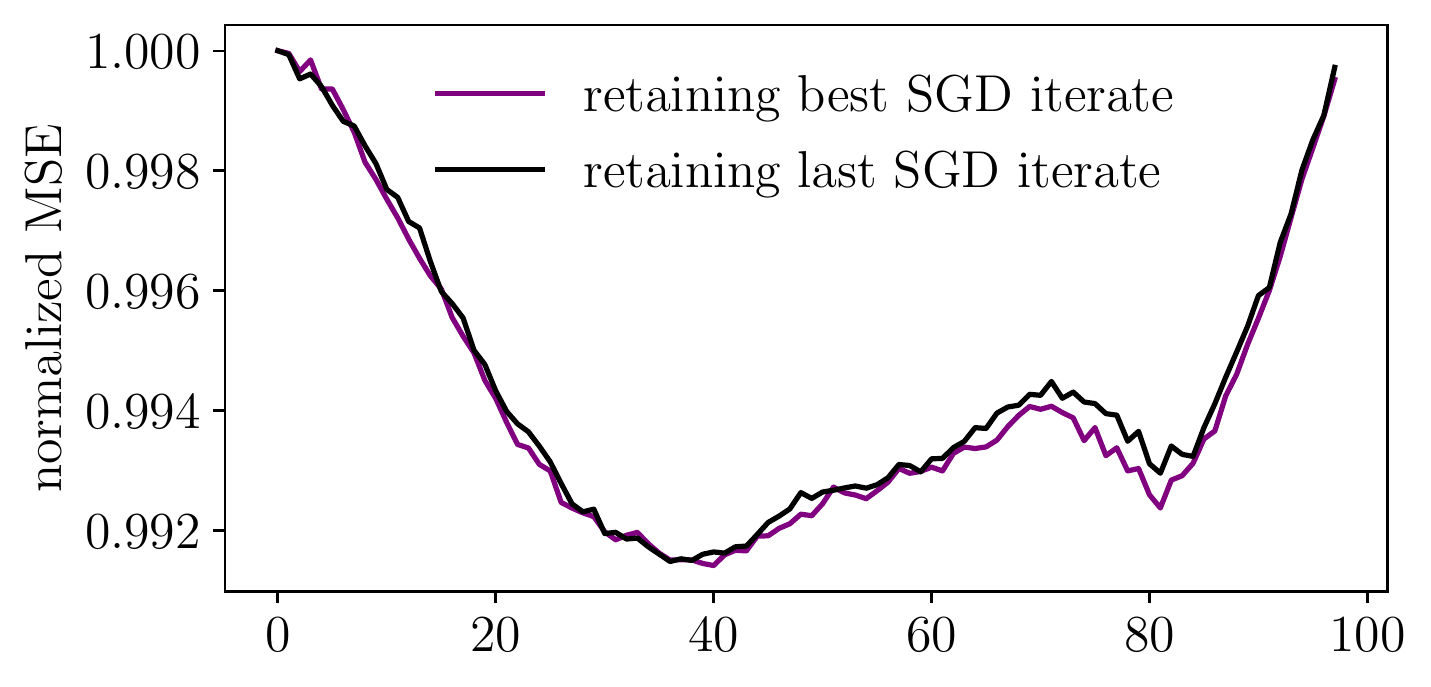}
 \caption{Out-of-sample MSEs against labels $\xi_{i,n}$ at different time-steps divided by the variance of the labels in the context of the CVA case study of Section \ref{sec:cva}.}
 \label{fig:bestvslast}
\end{minipage}
\end{figure}
In the Markov chain $X$ case, we face the additional peculiarity of a hybrid regression setting, with discrete and continuous natures of the $X$ and $Y$ model components.

Neural networks \citep{goodfellow2016deep} propose an alternative way to parameterize and learn the feature map. Let $\mathcal{NN}_{{p+q}, \theh ,u,\varsigma}$ denote the set of functions of the form
\[\mathbb{R}^{{p+q}}\owns z\mapsto \zeta(z;W^{[\theh +1]},\dots,W^{[1]},b^{[\theh +1]},\dots,b^{[1]})\eqqcolon \zeta^{[\theh +1]}(z;W,b),\] where $$W^{[\theh +1]}\in\mathbb{R}^{1\times u},\dots,W^{[\ell]}\in\mathbb{R}^{u\times u},\dots,W^{[1]}\in\mathbb{R}^{u\times (p+q)}$$ are weight matrices, $$b^{[\theh +1]}\in\mathbb{R},\dots,b^{[\ell]}\in\mathbb{R}^u,\dots,b^{[1]}\in\mathbb{R}^u$$ are bias offsets,
$W$ and $b$ are the respective concatenations of the $W^{[\ell]}$ and of the $b^{[\ell]}$,
$\varsigma$ is a scalar nonlinearity applied element-wise and, for every $z\in\mathbb{R}^{p+q}$, 
\[
\begin{aligned}
&\zeta^{[0]}(z;W,b) = z\\
 &\zeta^{[\ell ]}(z;W,b)= \varsigma(W^{[\ell ]}\zeta^{[\ell-1]}(z;W,b)+b^{[\ell]}),\quad \ell=1, \dots, h \\
& \zeta^{[\theh +1]}(z;W,b) = W^{[\theh +1]}\zeta^{[\theh ]}(z;W,b)+b^{[\theh +1]}.
\end{aligned}
\]
\sloppy {The function $z\mapsto \zeta^{[\theh  ]}(z;W,b)$ can be seen as} a nonlinear feature mapping from $\mathbb{R}^{p+q}$ to $\mathbb{R}^u$, parameterized by $W^{[\theh ]},\dots,W^{[1]}, b^{[\theh ]},\dots,b^{[1]}$ (for a given activation function $\varsigma$): see Figure \ref{fig:NN}.
\begin{figure}[htbp]
\centering
\includegraphics[scale=0.7]{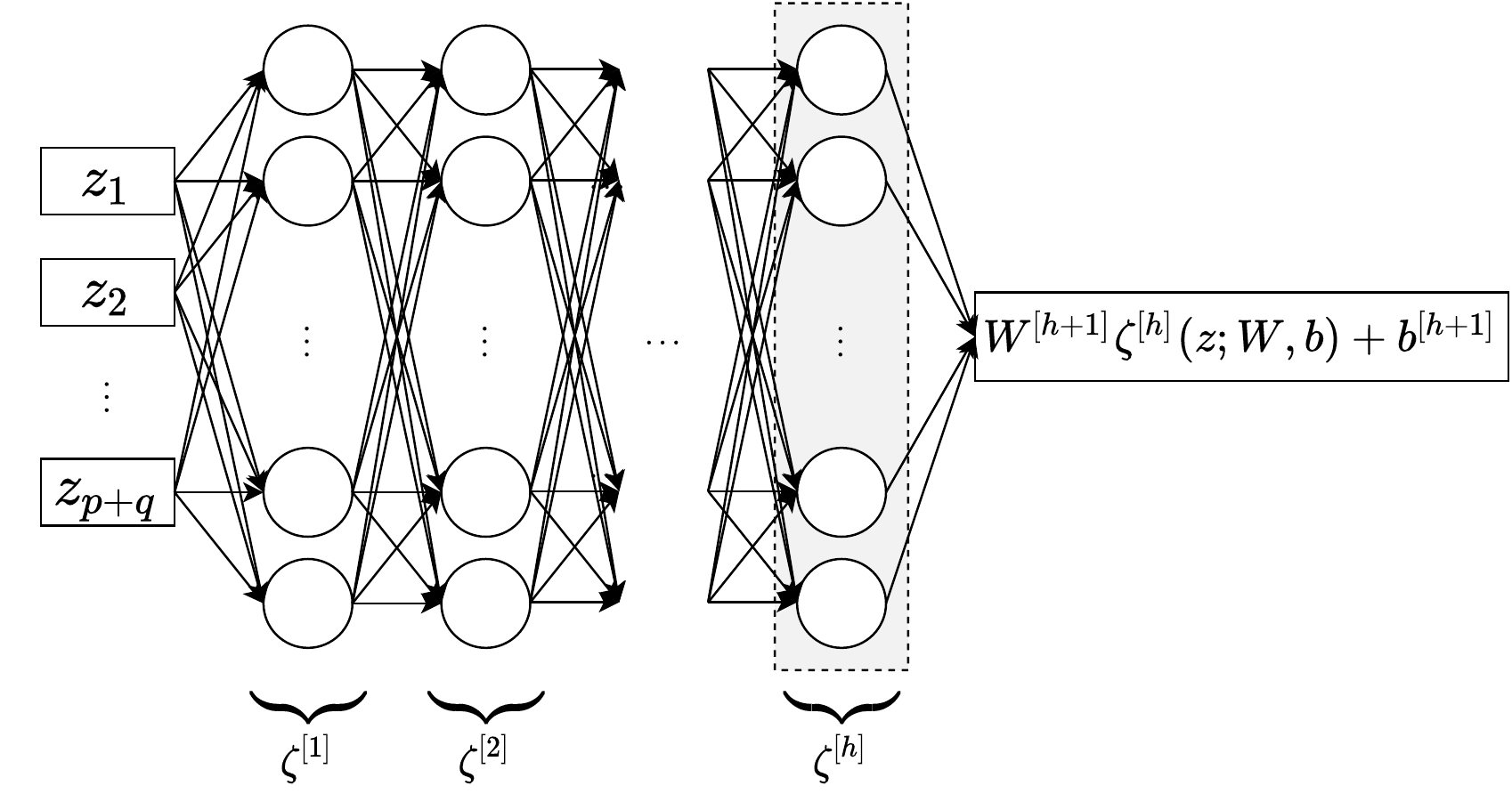}
 \caption{The last hidden layer of our network, $\zeta^{[h]}$, can be seen as a parameterization of an endogenous, implicit linear regression basis.}
 \label{fig:NN}
\end{figure}

On top of the set 
$\mathcal{NN}_{p+q, h, u,\varsigma}$ of real-valued neural networks taking inputs from $\mathbb{R}^{p+q}$, with $\theh $ hidden layers, $u$ units per hidden layer (and one output neuron), and $\varsigma$ as the activation function, 
we also define \beql{e:nnp}\mathcal{NN}_{p+q, \theh ,u,\varsigma}^+\coloneqq\{\mathbb{R}^{p+q}\owns z\mapsto(f(z))^++\mu, f\in\mathcal{NN}_{p+q, \theh , u,\varsigma}, \mu\in\mathbb{R}\}.\eeql 
This specification ensures positivity of the output when the additive constant $\mu$ is nonnegative and is useful for learning positive (e.g.~ XVA) functions. The additive constant $\mu$ is introduced in order to improve the fit of the first moment of the target function.

\subsection{Local Training Algorithm}

Learning the conditional expectation \eqref{cond_exp_discrete} in a positive neural net search space consists in applying the same empirical risk minimization
\eqref{e:forw} approximation as in linear regression, using this time $\mathcal{NN}_{p+q, \theh ,u,\varsigma}^+$ as the search space, i.e.~solving for 
 \beql{e:regrkappo}\hat{\varphi}_i\in\argmin_{\varphi_i\in\mathcal{NN}_{p+q, h,u,\varsigma}^+} \sum_{\iota\in\mathcal{I}} (\xi_{i,n}^{\iota} -\varphi_i(\theX_{\tpi }^{\iota}, \theI_{\tpi }^{\iota})
)^2 .\eeql
This is achieved by a mini-batch stochastic gradient descent.

For learning a positive output (e.g.~an XVA), the addition of a ReLU activation
$\cdot^{+}$
at the output layer in \eqref{e:nnp}
can jeopardize the learning as the gradient may vanish at a certain SGD iteration and the parameters are then frozen irrespective of the number of subsequent iterations. Thus, for more stability of the learning procedure, we first perform the first half of SGD steps on the network without the ReLU at the output layer. Then, still without the ReLU, we fine-tune the weights of the output layer by optimizing with respect to those weights only (freezing the weights of the hidden layers), which can be done in closed form in the case of quadratic risk minimization.
\brem
This fine-tuning step is not achievable in closed-form in the case of, for example, quantile regression\footnote{cf.~Remark \ref{rem:elici}.}. However, even in this case, the optimization problem is still convex and as such easier to solve numerically. 
\erem 
\noindent
Finally, we restore the ReLU at the output layer to proceed with the second half of the SGD iterations.

We also chose to retain the best set of parameters among those explored during the SGD iterations. Figure~\ref{fig:bestvslast} shows the corresponding improvement in generalization when applied in the context of the CVA case study of Section \ref{sec:cva}.

The ensuing learning scheme is detailed in Algorithm~\ref{alg:baseline_algo}. Note that we presented vanilla SGD iterations only for the sake of simplicity. In practice, accelerated SGD methods like Adam (\cite{kingma2014adam}) are used instead.
 
\begin{algorithm}[h]
\small
\SetAlgoLined
\SetKwInOut{AlgName}{name}
\SetKwInOut{Input}{input}\SetKwInOut{Output}{output}
\AlgName{BaseAlg}
\Input{$\{(X_i^{\iota}, Y_i^{\iota}, \xi_{i,n}^{\iota}),  \iota\in\mathcal{I}\}$, a partition $\mathcal{B}$ of $\mathcal{I}$, a number of epochs $E\in\mathbb{N}^{\star}$, a learning rate $\eta>0$, initial values for the network parameters $W$, $b$ and $\mu$}
\Output{Trained parameters $W_{\text{best}}$, $b_{\text{best}}$ and $\mu_{\text{best}}$} 
define $\displaystyle\mathcal{L}(W, b, \mu, \text{batch}, \text{pos})\coloneqq \left\{\begin{aligned}&\frac{1}{|\text{batch}|}\sum_{\iota\in\text{batch}} (\zeta^{[h+1]}(X^{\iota}_i, Y^{\iota}_i; W, b)+\mu-\xi_{i,n}^{\iota})^2 \quad &\text{if }\text{pos}=0\\ &\frac{1}{|\text{batch}|}\sum_{\iota\in\text{batch}} ((\zeta^{[h+1]}(X^{\iota}_i, Y^{\iota}_i; W, b))^++\mu-\xi_{i,n}^{\iota})^2 \quad &\text{if }\text{pos}=1\end{aligned}\right.$\\
 $\mathcal{L}_{\text{best}}\leftarrow \infty$, $\text{pos}\leftarrow 0$\\
 \For(\tcp*[f]{loop over epochs}){$\text{epoch} = 1, \dots, E$}{
  \For(\tcp*[f]{loop over batches}){$\text{batch} \in \mathcal{B}$}{
   \For{$\ell=1, \dots, h+1$}{
    $W^{[\ell]}\leftarrow W^{[\ell]} - \eta \nabla_{W^{[\ell]}} \mathcal{L}(W, b, \mu, \text{batch}, \text{pos})$\\
    $b^{[\ell]}\leftarrow b^{[\ell]} - \eta \nabla_{b^{[\ell]}} \mathcal{L}(W, b, \mu, \text{batch}, \text{pos})$\\
   }
   $\mu\leftarrow \mu - \eta \partial_\mu \mathcal{L}(W, b, \mu, \text{batch}, \text{pos})$\\
  }
  \If(\tcp*[f]{tune weights of last layer}){$\text{epoch}=\left\lfloor \frac{E}{2}\right\rfloor$}{
   $(W^{[h+1]}, b^{[h+1]}) \leftarrow \hspace{-3mm} \argmin\limits_{\widetilde{W}^{[h+1]}, \widetilde{b}^{[h+1]}} \hspace{-3mm}\mathcal{L}(\{W^{[0]}, \dots, W^{[h]}, \widetilde{W}^{[h+1]}\}, \{b^{[0]}, \dots, b^{[h]}, \widetilde{b}^{[h+1]}\}, \mu, obs, 0)$\\
   $\text{pos} \leftarrow 1$\\
  }
  \If(\tcp*[f]{keep track of best parameters}){$\mathcal{L}(W, b, \mu, \mathcal{I}, 1) < \mathcal{L}_{\text{best}} $}{
   $\mathcal{L}_{\text{best}}\leftarrow \mathcal{L}(W, b, \mu, obs, 1)$\\
   $W_{\text{best}}\leftarrow W$\\
  	$b_{\text{best}}\leftarrow b$\\
  	$\mu_{\text{best}}\leftarrow \mu$\\
  }
 }
\caption{Baseline learning scheme for training at a given time-step $i$}
 \label{alg:baseline_algo}
\end{algorithm}
\subsection{Backward Learning}\label{e:bd}

In the setup of the path-wise pricing problem \eqref{cond_exp_discrete},  at each pricing time $\tpi $, a separate learning problem is solved by Algorithm \ref{alg:baseline_algo}. Since the algorithm returns for each problem a local minimum, it is possible to end up with an approximation of the 
{pricing function $\mathbb{E}[\xi_{i,n} \left|\theX_{\tpi }=x, \theI_{\tpi }=y\right.]$  (cf.~\eqref{cond_exp_discrete}) } with noisy paths (i.e.~with respect to time $i$) if the local minima are not close to each other, even for fixed $x$ and $y$. Yet, for two consecutive time-steps $\tpi $ and $i+1$, the learning problems are similar. One possible refinement is, after having learned $\theE  _{i+1}$, to initialize the parameters of the
network at time $\tpi $ with the parameters of the network trained at time $i+1$. This not only smoothes the results across regression times, but also accelerates convergence.

We obtain an algorithm which starts the learnings at the last time step ${\thenb }$ and, proceeding backward in time until time step $1$, reuses each time the previously trained weights and biases as an initialization for the next learning. The ensuing backward learning scheme is detailed in Algorithm~\ref{alg:backward_algo}. This process of reusing knowledge from a different but related learning task can be seen as a form of transfer learning~\citep{pan2009survey,bozinovski2020reminder}.
{\def\imath{\text{batch}}
\begin{algorithm}[!htbp]
\small

\SetAlgoLined
\SetKwInOut{Input}{input}\SetKwInOut{Output}{output}
\Input{$\{(X_i^{\iota}, Y_i^{\iota}, \xi_{i,n}^{\iota}), \iota\in\mathcal{I}, 1 \leq i \leq n\}$, a partition $\mathcal{B}$ of $\mathcal{I}$, a number of epochs $E\in\mathbb{N}^{\star}$, 
 a learning rate $\eta>0$} 
\Output{$\hat{\varphi}_1, \dots, \hat{\varphi}_n$} 
 initialize parameters
 $W_{n+1}$, $b_{n+1}$ and $\mu_{n+1}$
 of the network at terminal time-step $n$\\
\For{$i=\thenb  \dots 1$}{
$W_i, b_i, \mu_i \leftarrow \text{BaseAlg}(\{(X_i^{\iota}, Y_i^{\iota}, \xi_{i,n}^{\iota}),  \iota\in\mathcal{I}\}, \mathcal{B}, E, \eta, W_{i+1}, b_{i+1}, \mu_{i+1})$

  $\hat{\varphi}_{\tpi}\leftarrow\{ x\mapsto \zeta^{[h+1]}(x, y; W_i, b_i)+\mu_i\}$\\ }
 \caption{Backward learning scheme}
 \label{alg:backward_algo}
\end{algorithm}
}

\brem\label{rem:bsde}
A variation on the above would be forward learning.
We favor the backward learning scheme because it is the only one that is amenable to more general backward stochastic differential equations, such as the equations for the FVA and the KVA in \citeN{Crepey21}.
In addition, in these XVA applications, the
labels/features
corresponding to times $i$ closer to the final maturity $n$ of the portfolio have a lower{/higher}  variance.  {Hence the training task corresponding to a lower time step $\tpi$ is harder}.  Proceeding backward in time, we thus solve successively harder and harder problems, but problems which benefit from the knowledge acquired solving the previous ones: a virtuous divide-and-conquer strategy.
\erem

\subsection{A Posteriori Twin Monte Carlo Validation Procedure\label{ss:twin}} 

As part of the validation of our approach, we provide an a posteriori Monte Carlo $L^2$ error estimation procedure, which can be used for assessing the quality of any estimator of a conditional expectation, without any prior knowledge on the latter (and without heavy nested Monte Carlo). At a given time step $i$, let $\xi_{i,n}^{(1)}$ and $\xi_{i,n}^{(2)}$ denote two independent copies of $\xi_{i,n}$ conditional on $(X_i, Y_i)$\footnote{The conditional independence means that for any Borel bounded functions $\phi_1$ and $\phi_2$, we have $\mathbb{E}[\phi_1(\xi_{i,n}^{(1)})\phi_2(\xi_{i,n}^{(2)})|X_i, Y_i] = \mathbb{E}[\phi_1(\xi_{i,n}^{(1)})|X_i, Y_i]\mathbb{E}[\phi_2(\xi_{i,n}^{(2)})|X_i, Y_i]$.}. For any Borel function $\varphi:\mathbb{R}^p\times\mathbb{R}^q\rightarrow\mathbb{R}$ such that $\varphi(X_i, Y_i)$ is square integrable (e.g.~a neural net estimate of $\mathbb{E}[\xi_{i,n}|X_i, Y_i]$), 
we have:
\beql{e:val1}\mathbb{E}[(\varphi(X_i, Y_i)-\mathbb{E}[\xi_{i,n}|X_i, Y_i])^2] = \mathbb{E}[\varphi(X_i, Y_i)^2-(\xi_{i,n}^{(1)}+\xi_{i,n}^{(2)})\varphi(X_i, Y_i)+\xi_{i,n}^{(1)}\xi_{i,n}^{(2)}].\eeql
The equality stems from the fact that, by   conditional independence, 
\beql{e:val2}
(\mathbb{E}[\xi_{i,n}|X_i, Y_i])^2 = \mathbb{E}[\xi_{i,n}^{(1)}|X_i, Y_i]\mathbb{E}[\xi_{i,n}^{(2)}|X_i, Y_i] = \mathbb{E}[\xi_{i,n}^{(1)}\xi_{i,n}^{(2)}|X_i, Y_i],
\eeql
followed by an application of the tower rule.
Thus, one can approximate the $L^2$ error of any estimator for the conditional expectation, without any knowledge on the latter, using only two inner paths. This can be used as a very fast validation procedure and as a safeguard in a production environment before using the learned values. A slower but more complete nested Monte Carlo approach is then only needed periodically, e.g.~after significant changes in the risk factor models, or to perform more elaborate checks (e.g.~tail behavior).


\section{\OverSimulation and its Analysis\label{s:hierarch}}

Learning tasks involving defaults such as \eqref{e:regrkappo}
may be challenging, even with optimized training schemes.  As should always be first scrutiny with machine learning, a difficulty can come from the data, \ie  from the simulation part in our case.
Specifically,
a large variance of the estimated population loss function can jeopardize the learning approach, which we address in what follows by a suitable \oversimulation approach.
 
\subsection{Identification of the Variance Contributions Using Automatic Relevance Determination\label{ss:ard}}
{\def\Xi{\nu}

In this part 
we show how to hierarchize the variance 
impact of explanatory variables using automatic relevance determination (ARD). As detailed in \citeN[Sections 5.1, 5.4.3, 6.6, and  8.3.7]{rasmussen2006gaussian}, ARD is a Bayesian procedure for feature selection and consists in estimating the relevance of the features by maximizing a marginal likelihood. 
In our case, we apply a Gaussian process regression based ARD to quantify empirically the impact of the variances of $\theX$ and $\theI$ on that of $\theY$.

Toward this aim, we treat the vector of the parameters of the data generating process (DGP, e.g.~a financial model), denoted by $\Xi$, 
as a latent variable endowed with 
some instrumental distribution.
\bex In  the CVA setup of 
Sections \ref{sec:mktcreditmodel}-\ref{sec:mktcreditmodelcontinuous},  $\nu$ is the collection of the drift and diffusion parameters of the SDEs \eqref{e:sde1}-\eqref{e:sde2} for all $e$ and $c$.
\eex
Given $\Xi$, we sample the following time-averages of the variances of $X_1, \dots, X_n$, $Y_1, \dots, Y_n$ and $\xi_{1,n}, \dots, \xi_{n,n}$:
\[
\begin{aligned}
{V(X|\Xi) } = & \frac{1}{n+1} \sum_{i=0}^n \widehat{\mathbb{E}}^\Xi [(X_i-\widehat{\mathbb{E}}^\Xi [X_i ])^2 ]\\
{V(Y|\Xi) } = & \frac{1}{n+1} \sum_{i=0}^n \widehat{\mathbb{E}}^\Xi [(Y_i-\widehat{\mathbb{E}}^\Xi [Y_i ])^2 ]\\
{V(\xi|\Xi)} = & \frac{1}{n+1} \sum_{i=0}^n \widehat{\mathbb{E}}^\Xi [(\xi_{i,n}-\widehat{\mathbb{E}}^\Xi [\xi_{i,n}])^2 ],
\end{aligned}
\]
meant componentwise in the vector cases of $X$ and $Y$, where $\widehat{\mathbb{E}}^\Xi$ is an empirical average over paths sampled for a  given realization $\Xi$ of the DGP parameters. 
Then, based on a finite sample of $\Xi$ and on the corresponding realizations of the triple $(V(X|\Xi),V(Y|\Xi) ,V(\xi|\Xi))$, we perform a Gaussian process regression \citep{rasmussen2006gaussian} of  
$V(\xi |\Xi)$ against  $V(X|\Xi)$ and $V(Y|\Xi)$.
In this procedure we use an anisotropic kernel $ 
\exp(-\sum_{j=1}^p  \frac{(v_j-v'_j)^2}{2 \lambda_{x,j} ^2}-\sum_{j=1}^q \frac{(w_j-w'_j)^2}{2 \lambda_{y,j}^2})$, where the hyperparameters $\lambda_{x,1}, \dots, \lambda_{x,p}$ and $\lambda_{y,1}, \dots, \lambda_{y,q}$ 
are characteristic length-scales for the corresponding components of $V(X|\Xi)$ and $V(Y|\Xi)$. Maximizing the marginal likelihood on the dataset allows to recover those length-scales and these can then be interpreted as relevance estimates for the different input variables. The higher the inverse length-scale gets, the more the corresponding variable influences the output (payoff variance, in our case).

This procedure is then
itself randomized, i.e.
run multiple times on the restricted datasets corresponding to different sub-samplings of $\Xi$.
This provides a distribution of the fitted hyper-parameters
$\lambda$ in the above, while being also less prone to over-fitting and local minima issues. A similar analysis was used in \citeN{bergstra2012random} to study the relevance of different neural network hyper-parameters with respect to the validation loss. 
\bex Figure~\ref{fig:ard-boxplot} reveals the dominance of the impact of the variance of $X$ on that of $\xi$ in the context of the CVA case study of Section \ref{sec:cva} (here for a single-client portfolio of the bank), where $Y$ is the vector of mark-to-market risk factor processes and $X$ 
is the default indicator process 
of the client.
\eex
\begin{figure}[h]
\begin{minipage}[b]{.36\linewidth}
\caption{\label{fig:ard-boxplot} CVA case with a single client in Section \ref{ss:leCVA}: Box-plot of the inverse length-scales obtained by randomized Gaussian process regressions of  the conditional variances of the cash flows $\xi$ against the conditional variances of the risk factors $X$ (default indicator process of the client) and $Y$, where conditional here is in reference 
to the parameters of the financial model  \eqref{e:sde1}-\eqref{e:sde2} for all $e$ and $c$,
treated as a random vector with a postulated distribution.\vspace{2cm}}
\end{minipage}
\hspace{7mm}\begin{minipage}[b]{.58\linewidth}
 \includegraphics[scale=0.63]{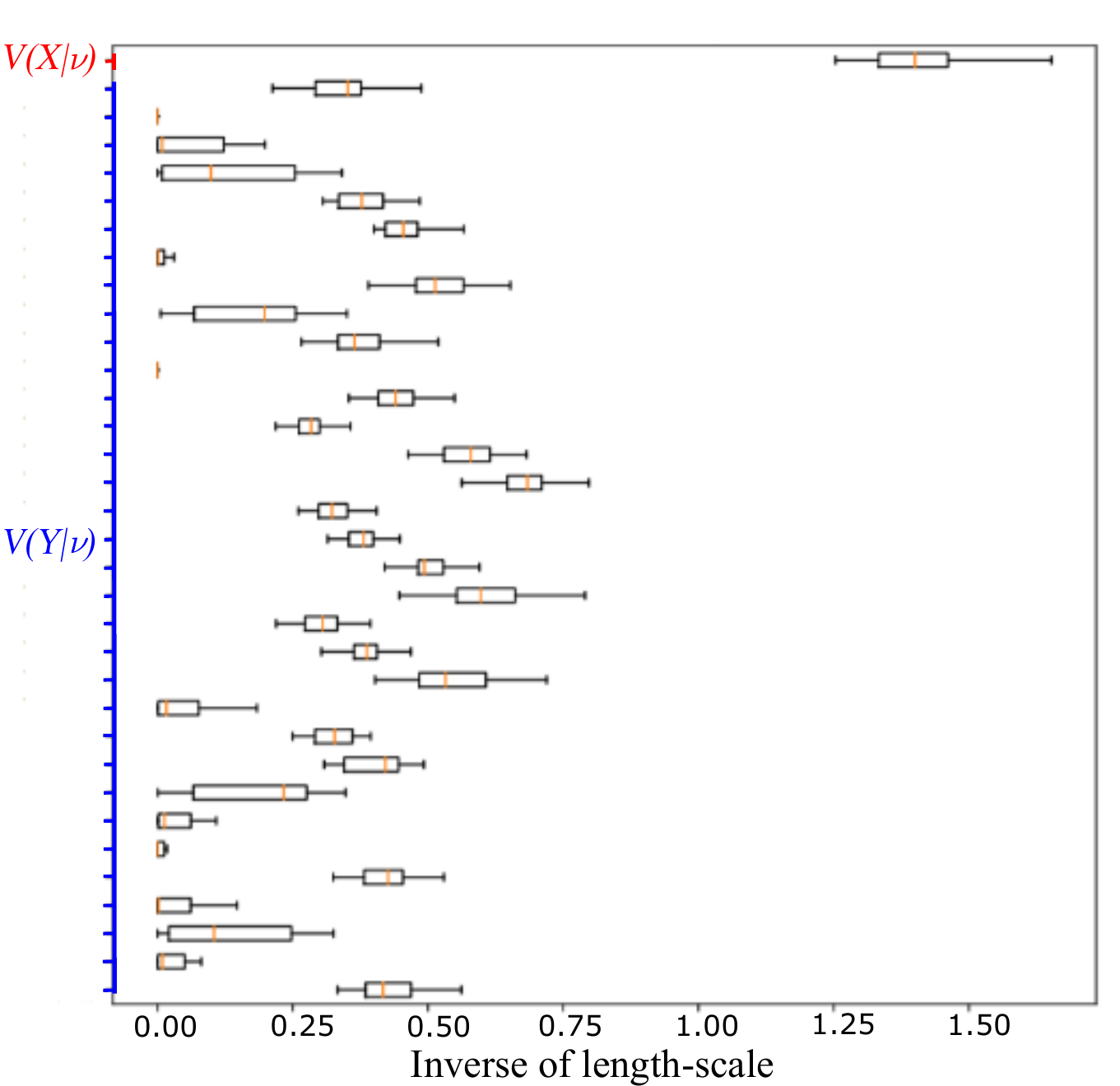} \vspace{3mm}
\end{minipage}
\end{figure}

}
\subsection{Learning on Hierarchically Simulated Paths}

If $X$ contributes more to the variance of $\xi$ than $Y$,
then, in order to deal with the resulting variance issue regarding the associated simulation/learning scheme, an idea is to simulate more realizations of $X$ than $Y$, even if this means giving up the independence of the simulation setup. More precisely, we simulate $\my$ i.i.d paths $Y^1, \dots, Y^{\my}$ of $\theI$ and, for every
$k\in \{1, \dots, \my\}$ and $i \in \{1, \dots n\}$, we simulate $\mx$ i.i.d realizations 
 $X^{k, 1}_i, \dots, X^{k, \mx}_i$ of $X_{i}$ conditional on $\theI^k$.
For every $i$, this yields a sample $(\theX_i^{k, l}, Y_i^k, \xi_{i,n}^{k, l})\sp k\in \{1, \dots, \my\}, l\in\{1, \dots, \mx\}$ of $(X_i, Y_i, \xi_{i,n})$ of size $\my\mx$, where, within each block $k$, independence between the $\theX^{\lefta k, l   \righta}$  only holds 
conditionally on $\theI^k$.  


Algorithm \ref{alg:backward_algo} is then run on the resulting hierarchically simulated dataset by taking $\mathcal{I}=\{1, \dots, \my\} \times \{1, \dots, \mx\}$, with  $Y^{k,l}:=Y^k$ for all $l$.
For implementation efficiency reasons pertaining to memory contiguity, 
the set of indices of the $\imath$-th batch, with $1\leq\imath\leq|\mathcal{B}|$, is chosen to be $\{(k, l)\in \{1, \dots, \my\} \times \{1, \dots, \mx\}: (\imath-1)\frac{|\mathcal{I}|}{\left|\mathcal{B}\right|} \leq (l-1)\mx+(k-1) + 1 < \imath\frac{|\mathcal{I}|}{\left|\mathcal{B}\right|}\}$.


\Oversimulation in the above sense can be thought of as a form of
data augmentation procedure \citep{shorten2019survey}, but in a simulation setup where one knows how to generate the data perfectly.
The main question is then to which extent one should augment the data, i.e.~the choice of the \oversimulation parameters $\my$ and $\mx$, which is the focus of the sequel of this section. 
\brem \label{rem:IS}  \Oversimulation is 
different in nature from importance sampling that favors particular events, e.g., in a credit risk setup, default versus survival (see e.g.~\citeN{carmona2010particle}).
In an XVA setup, some metrics, like the~CVA, need default events for being properly estimated, whereas others, like the FVA, require survival events. Hence what one needs is richness regarding both default and survival events, which is what \oversimulation provides.

It is also unrelated to nested Monte Carlo as per \citeN{Gordy} or (in an XVA setup) \citeN{abbas2018xva}, which optimizes the complexity of the  simulation with respect to the number of trajectories on each Monte Carlo layer. In these contributions, the inner conditional trajectories are not assumed to be much less complex to simulate than the outer one.
Thus, in contrast to the asymptotic results obtained for all layers in \citep*{Gordy, abbas2018xva}, Heuristic \ref{heu:2} separates the asymptotic strategy for $N$ from the non asymptotic strategy proposed for $M$.
~\erem

\subsection{Choosing the \OverSimulation Factor\label{ss:choice1}}

Assume that simulating $Y_i$ costs $\kappa$ times more than simulating $X_i$  given a path $\{Y_j\}_{j \leq i}$
in terms of computation time. The \oversimulation factor $\mx$ can be chosen so as to minimize the variance ($\tmop{\Var}$) of the loss $\frac{1}{\my \mx} 
\sum_{k = 1}^{\my} \sum_{l = 1}^{\mx} g_i(\theta, X_i^{k, l}, \dots, X_n^{k, l}, Y_i^k, \dots, Y_n^k)$ with respect to $\mx$, under a budget constraint $ \my (\mx + \kappa)=\Psi $, where
$g_i$ is the point-wise loss of our learning task at time-step $i$, e.g.~$g_i (\theta, X_i, \dots, X_n, Y_i, \dots, Y_n) = (\xi_{i,n} -  \varphi^{\theta}_i  (X_i, Y_i))^2$, where $  \varphi^{\theta}_i $ is a neural net (element of $\mathcal{NN}_{p+q, \theh ,u,\varsigma}^+$) with
 parameters collectively denoted by $\theta$ (cf.~\eqref{e:regrkappo}).

For ease of notation in this and the next part, 
we write $g_i(\theta, X^{k, l}, Y^k)$ and $g_i(\theta, X, Y)$ instead of $g_i(\theta, X_i^{k, l}, \dots, X_n^{k, l}, Y_i^k, \dots, Y_n^k)$ and $g_i(\theta, X_i, \dots, X_n, Y_i, \dots, Y_n)$ (it is then implied that $X$ and $Y$ play formally the role of vectors containing their path from time-step $i$ up to $n$).

\bp\label{p:heurisN}
The \oversimulation factor that minimizes the variance of the loss $\frac{1}{\my \mx} 
\sum_{k = 1}^{\my} \sum_{l = 1}^{\mx} g_i(\theta, X^{k, l}, Y^k)$ with respect to $\mx$, subject to the budget constraint $ \my (\mx + \kappa)=\Psi ,$ 
 is  \begin{equation}\label{e:optom}\mx^\theta_i =\sqrt{\frac{Q^{\theta}_i \kappa}{R^{\theta}_i}},\end{equation}
where
\bel&
R^{\theta}_i=\mathbb{C}{\rm ov}\big(g_i(\theta, X^{1, 1}, Y^1),g_i(\theta, X^{1, 2}, Y^1)\big)=\mathbb{V}{\rm ar}\big(   \mathbb{E} (g_i(\theta, X^{1, 1}, Y^1) | Y^1) \big)
\\&
Q^{\theta}_i  = \mathbb{E}\big( \mathbb{V}{\rm ar} (g_i(\theta, X^{1, 1}, Y^1) | Y^1)\big)= \mathbb{V}{\rm ar} (g_i(\theta, X^{1, 1}, Y^1)  ) - R^{\theta}_i.
\eel 
\ep
\proof 
After rearranging terms, one can show that
\[
 \tmop{\Var} ( \frac{1}{\my \mx}  \sum_{k = 1}^{\my} \sum_{l =
   1}^{\mx} g_i(\theta, X^{k, l}, Y^k) ) = \frac{R^{\theta}_i}{\Psi}  ( \frac{1}{\mx} ( \mx - \sqrt{\frac{Q^{\theta}_i \kappa}{R^{\theta}_i}})^2 + ( \sqrt{\frac{Q^{\theta}_i}{R^{\theta}_i}} + \sqrt{\kappa} )^2 ),  
\]
where
\bel&
Q^{\theta}_i =
 \mathbb{E} [(g_i(\theta, X^{1, 1}, Y^1))^2] -\mathbb{E} [g_i(\theta, X^{1, 1}, Y^1) g_i(\theta, X^{1, 2}, Y^1)]
\\&
R^{\theta}_i
  =\mathbb{E} [g_i(\theta, X^{1, 1}, Y^1) g(\theta, X^{1, 2}, Y^1)] - (\mathbb{E} [g_i(\theta, X^{1, 1}, Y^1)])^2.~\finproof
\eel

The quotient \bel&\frac{Q^{\theta}_i}{R^{\theta}_i}    =\frac{\mathbb{E}\big( \mathbb{V}{\rm ar} (g_i(\theta, X^{1, 1}, Y^1) | Y^1)\big)}{\mathbb{V}{\rm ar}\big(\mathbb{E} (g_i(\theta, X^{1, 1}, Y^1) | Y^1)\big)} \eel in \eqref{e:optom} measures the 
relative
contributions of $Y$ and $X$
to the variance of the loss estimator (note that $Q^{\theta}_i+R^{\theta}_i=\mathbb{V}{\rm ar}  (g_i(\theta, X^{1, 1}, Y^1)) $, by the total variance formula). 
To estimate the values of $Q^{\theta}_i$ and of $R^{\theta}_i$ therein, one only needs to simulate $(X^{1, 1}, X^{1, 2}, Y^1)$, i.e., with respect to the bare simulation of $(X, Y)$, one extra simulation of $X$ conditional on each realization of $Y$.

As a fixed value of $\mx$ has to be chosen throughout all the simulation and training task, for the above result to be of practical use, $\mx^\theta_i$ has to be reasonably stable with respect to both pricing time steps $i$ and SGD iterations (the transfer learning scheme of Section~\ref{e:bd} is advantageous in this respect in that it stabilizes the learning). If so, it leads to the following:

\begin{heuristic} \label{heur:MN}
Choose for $N$ the average 
 of the values $N_i^{\theta}$ obtained during the SGD iterations and the time steps. Make for $M$ the corresponding choice deduced from the budget constraint, i.e.~$M
 =\frac{\Psi}{N+P}$. 
\end{heuristic}
\noindent
Note that $N$ depends only on $P$, and $M$ on $P$ and $\Psi$.  
If $P$ is not analytically known, it can be deduced from simulation times of experiments corresponding to the same $M$ but different $N$.
Namely,
let $\Psi$ and $\Psi'$ the budgets corresponding to configurations $(M, N)$ and $(M, N')$. We have \beql{l:ratio}
\frac{\Psi}{\Psi'}=\frac{P+N}{P+N'}.\eeql 
One can deduce $P$ by identifying the ratio in \eqref{l:ratio} to that of the execution times of $(M, N)$ and $(M, N')$. For doing so, it is preferable to choose $M$ large enough to avoid time measurement noise that may be due to caching or parallelization of the simulations.


\subsection{Statistical Convergence Analysis \label{ss:statconvanalysis}}

In this part we completely omit the index $i$ from the notation.
For every possible parameterization $\theta \in \Theta \subset \mathbb{R}^d$ of our neural network (with $d$ parameters), define
\begin{eqnarray}
  G (\theta) & \coloneqq & \mathbb{E} [g (\theta, X, Y)] \nonumber\\
  \hat{G}_{\my , \mx  } (\theta) & \coloneqq & \frac{1}{\my\mx}  \sum_{k = 1}^\my \sum_{l =
  1}^\mx g (\theta, X^{k, l}, Y^k) \nonumber
\end{eqnarray}
and, for all $\epsilon > 0$ and non-empty subsets $E$ of $\Theta$:
\begin{eqnarray}
  S^{\epsilon} (E) & \coloneqq & \{ \theta \in E : G (\theta) \leq
  \min_{E} G  
  + \epsilon \} \nonumber\\
  \hat{S}^{\epsilon}_{\my , \mx  } (E) & \coloneqq & \{\theta \in E : \hat{G}_{\my , \mx}
  (\theta) \leq
 \min_{  E} \hat{G}_{\my , \mx  }.
   + \epsilon\} .
\end{eqnarray}
Let $\mathcal{X}$ and $\mathcal{Y}$ be the state spaces of $X$ and $Y$.
For all
$\theta, \theta' \in \Theta$ and $t \in \mathbb{R}, y \in \mathcal{Y}$,
denote:
\begin{eqnarray}
  \Gamma (\theta, \theta', X, Y) & \coloneqq & g (\theta', X, Y) - g (\theta, X,
  Y) \nonumber\\
  \mathcal{M} (\theta, \theta', t, y) & \coloneqq & \mathbb{E} [\exp (t \Gamma
  (\theta, \theta', X, Y)) |Y = y]. \label{e:Mt}
\end{eqnarray}
We are interested in the event
\beql{e:theevent} \{ \hat{S}^{\delta}_{\my , \mx} (E) \not\subset S^{\epsilon} (E) \}
   = \bigcup_{\theta \in E \setminus S^{\epsilon} (E)} \bigcap_{\theta' \in E}
   \{ \hat{G}_{\my , \mx} (\theta) \leq \hat{G}_{\my, \mx} (\theta') + \delta \} ,\eeql
where $E$ is a non-empty subset of $\Theta$ and $\epsilon, \delta > 0$.
This is the event that a close to minimum of the finite-sample problem is far from being a minimum of the population mean minimization problem.
Theorem \ref{bound_finite} provides a bound on the probability of this
event when $E$ is finite.

\begin{theorem}
  \label{bound_finite}
  Let $E$ be a finite and non-empty subset of $\Theta$ and
  let $0 < \delta < \epsilon$. Assume that $E \setminus S^{\epsilon} (E) \neq
  \emptyset$ and that there exist $b_1, b_2 > 0$ such that for every $t \in
  \mathbb{R}$ and $\theta, \theta' \in E:$  
  \begin{eqnarray}
    \mathcal{M} (\theta, \theta', t, Y) & \leq & \exp ( \mathbb{E}
    [\Gamma (\theta, \theta', X, Y) |Y] t + \frac{b_1^2 t^2}{2} ) \quad
    \text{a.s.}  \label{subgaussian1}\\
    \mathbb{E} [\exp (t\mathbb{E} [\Gamma (\theta, \theta', X, Y) | Y])] &
    \leq & \exp ( \mathbb{E} [\Gamma (\theta, \theta', X, Y)] t +
    \frac{b_2^2 t^2}{2} ) . \label{subgaussian2}
  \end{eqnarray}
  Then
  \beql{e:impactN1}  \Q ( \hat{S}^{\delta}_{\my, \mx} (E) \not\subset S^{\epsilon}
     (E) ) < | E | \exp ( \frac{- \my(\epsilon - \delta)^2}{2 (
   b_1^2/\mx + b_2^2 )} ). \eeql
\end{theorem}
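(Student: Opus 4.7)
The plan is to decompose the target event via the union bound exactly as written in \eqref{e:theevent}, then replace the intersection over all $\theta'\in E$ by its particular realization at a minimizer $\theta^\star\in\argmin_E G$, so that the problem reduces to controlling, for each fixed $\theta\in E\setminus S^{\epsilon}(E)$, the deviation probability
\[
\Q\big(\hat G_{N,M}(\theta)-\hat G_{N,M}(\theta^\star)\leq\delta\big)
=
\Q\Big(\tfrac{1}{NM}\textstyle\sum_{k,l}\Gamma(\theta^\star,\theta,X^{k,l},Y^k)\leq\delta\Big).
\]
Since $\theta\notin S^{\epsilon}(E)$ we have $\mathbb{E}[\Gamma(\theta^\star,\theta,X,Y)]=G(\theta)-G(\theta^\star)>\epsilon>\delta$, so the event is genuinely a lower-tail deviation and a Chernoff/Markov argument of the form $\Q(S\le\delta)\le e^{t\delta}\mathbb{E}[e^{-tS}]$ with $t>0$ is natural.

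The core computation will be the hierarchical moment generating function bound. Because the $Y^k$ are i.i.d.\ across $k$ and, conditionally on $Y^k$, the $X^{k,l}$ are i.i.d.\ across $l$, writing $s=-t/(NM)$ I get
\[
\mathbb{E}\big[e^{-tS}\big]
=\mathbb{E}\Big[\prod_{k=1}^{N}\mathbb{E}\big[e^{s\sum_l\Gamma(\theta^\star,\theta,X^{k,l},Y^k)}\,\big|\,Y^k\big]\Big]
=\mathbb{E}\Big[\prod_{k=1}^{N}\mathcal{M}(\theta^\star,\theta,s,Y^k)^{M}\Big].
\]
Assumption \eqref{subgaussian1} then gives the inner (conditional) sub-Gaussian bound
$\mathcal{M}(\theta^\star,\theta,s,Y^k)^{M}\le\exp\bigl(Ms\,W_k+\tfrac{Mb_1^2s^2}{2}\bigr)$ where $W_k=\mathbb{E}[\Gamma(\theta^\star,\theta,X,Y)\mid Y^k]$; assumption \eqref{subgaussian2} applied next to the i.i.d.\ products $\prod_k e^{sMW_k/? }$ (after collecting the $s$ back into $-t/N$) gives the outer bound
\[
\mathbb{E}\big[e^{-tS}\big]\;\leq\;\exp\!\Big(-t\bigl(G(\theta)-G(\theta^\star)\bigr)+\frac{t^2}{2N}\Big(\frac{b_1^2}{M}+b_2^2\Big)\Big).
\]

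Plugging this back into the Chernoff inequality and using $G(\theta)-G(\theta^\star)>\epsilon$ yields a bound of the form $\exp\bigl(-t(\epsilon-\delta)+\tfrac{t^2}{2N}(b_1^2/M+b_2^2)\bigr)$, which I minimize in $t>0$ at $t^\star=N(\epsilon-\delta)/(b_1^2/M+b_2^2)$ to obtain the single-$\theta$ bound $\exp\!\bigl(-\tfrac{N(\epsilon-\delta)^2}{2(b_1^2/M+b_2^2)}\bigr)$. Finally, summing over the at most $|E|$ choices of $\theta\in E\setminus S^{\epsilon}(E)$ via the union bound produces \eqref{e:impactN1}; the strict inequality comes from the strict inequality $G(\theta)-G(\theta^\star)>\epsilon$ at the Chernoff step.

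The technical heart of the argument, and the step I would scrutinize carefully, is the two-level MGF decomposition: one has to correctly interleave the conditional independence of the $X^{k,l}$ given $Y^k$ (which is where the $b_1^2/M$ term—the benefit of hierarchical simulation—arises) with the full independence across blocks (which supplies the outer $1/N$ factor and the $b_2^2$ term). The rest, namely the union bound and the Chernoff optimization, is routine.
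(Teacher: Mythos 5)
Your plan is correct and follows essentially the same route as the paper: union bound over $E\setminus S^{\epsilon}(E)$, reduction of the intersection to a minimizer $\theta^\star$ of $G$, factorization of the hierarchical moment generating function using conditional independence of the $X^{k,l}$ given $Y^k$ and independence across blocks, then the two sub-Gaussian assumptions and Chernoff optimization in $t$, yielding exactly the exponent $-N(\epsilon-\delta)^2/\bigl(2(b_1^2/M+b_2^2)\bigr)$. The only (cosmetic) difference is that you inline the Chernoff computation directly, whereas the paper routes it through a general Cram\'er-type lemma (Lemma~\ref{largedev_oversim}) whose Fenchel conjugate it then lower-bounds by the same quadratic optimization.
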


\begin{proof}
  See Section \ref{proof_bound_finite}.~\finproof\\
\end{proof}

Theorem \ref{bound_infinite} yields a similar bound valid for a possibly infinite parameter space, 
under additional assumptions of compactness and convexity of this space and
Lipschitz continuity of the point-wise loss function. For brevity we write $S^{\epsilon}(\Theta) \coloneqq S^{\epsilon} $, $\hat{S}^{\delta}_{\my, \mx}(\Theta) \coloneqq \hat{S}^{\delta}_{\my, \mx}$.

\begin{theorem}
  \label{bound_infinite}Assume that $\Theta$ is a compact and convex and let $0 < \delta < \epsilon$. Let $D \coloneqq \sup_{\theta,
  \theta' \in \Theta} \| \theta - \theta' \|$ and assume that there exists a
  mapping $L : \mathcal{X} \times \mathcal{Y} \rightarrow
  \mathbb{R}^{\star}_+$ such that $\mathbb{E} [\exp (t L(X, Y))] <
  \infty$ for all $t$ in some neighbourhood of $0$ and for all $\theta,
  \theta' \in \Theta$:
  \begin{equation}
    \label{lipschitz} | g (\theta, X, Y) - g (\theta', X, Y) | \leq L(X,
    Y)  \| \theta - \theta' \| \quad \text{a.s.}
  \end{equation}
  Let $\bar{L} \coloneqq \mathbb{E} [L(X, Y)]$ and assume that there exist
  $\ell_1, \ell_2 > 0$ such that:
  \begin{eqnarray}
    | L(X, Y) -\mathbb{E} [L(X, Y) |Y] | & \leq & \ell_1 \quad
    \text{a.s.}  \label{lipschitz_bounded1}\\
    | \mathbb{E} [L(X, Y) |Y] - \bar{L} | & \leq & \ell_2 \quad \text{a.s.} 
    \label{lipschitz_bounded2}
  \end{eqnarray}
  and that there exist $b_1, b_2 > 0$ such that for every $t
  \in \mathbb{R}$ and $\theta, \theta' \in \Theta$ :
 \begin{eqnarray}\label{subgaussian21}&
    \mathcal{M} (\theta, \theta', t, Y)   \leq  \exp ( \mathbb{E}
    [\Gamma (\theta, \theta', X, Y) |Y] t + \frac{b_1^2 t^2}{2} ) \quad
    \text{a.s.}  \\&\label{subgaussian22}
    \mathbb{E} [\exp (t\mathbb{E} [\Gamma (\theta, \theta', X, Y) | Y])]  
    \leq  \exp ( \mathbb{E} [\Gamma (\theta, \theta', X, Y)] t +
    \frac{b_2^2 t^2}{2} ). 
     \end{eqnarray}
  Then 
\beql{e:impactN2} \Q ( \hat{S}^{\delta}_{\my, \mx} \not\subset S^{\epsilon}
     ) \leq \inf_{L' > \bar{L}} \{  \big( ( \frac{8 L' D}{\epsilon -
     \delta}+1 )^d +1\big)\exp ( \frac{- \my(\epsilon - \delta)^2}{8 (
     b_1^2/\mx + b_2^2 )} ) + \exp ( \frac{- \my(L' -
     \bar{L})^2}{2 ( \ell_1^2/\mx + \ell_2^2 )} ) \}. \eeql
\end{theorem}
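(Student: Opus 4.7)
The plan is to reduce the infinite-parameter deviation bound to the finite-parameter bound of Theorem~\ref{bound_finite} via an $\alpha$-net discretization, and to pair this with an auxiliary concentration bound on the empirical Lipschitz constant $\hat{L}_{M,N} \coloneqq \frac{1}{MN}\sum_{k,l} L(X^{k,l}, Y^k)$. Since $\Theta \subset \mathbb{R}^d$ is compact and convex of diameter $D$, a standard volumetric argument (cf.\ Corollary~4.2.13 of \cite{vershynin2018high}) produces, for every $\alpha > 0$, a subset $\Theta_\alpha \subset \Theta$ of cardinality at most $(1 + D/\alpha)^d$ such that every $\theta \in \Theta$ admits $\pi(\theta) \in \Theta_\alpha$ with $\|\theta - \pi(\theta)\| \leq \alpha$. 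I will ultimately set $\alpha = (\epsilon-\delta)/(8L')$, so that $|\Theta_\alpha|$ matches the prefactor appearing in \eqref{e:impactN2}.

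\textbf{Discretization and event inclusion.} Using the Lipschitz assumption~\eqref{lipschitz}, I obtain $|G(\theta) - G(\pi(\theta))| \leq \bar{L}\alpha$ and $|\hat{G}_{M,N}(\theta) - \hat{G}_{M,N}(\pi(\theta))| \leq \hat{L}_{M,N}\alpha$. Projecting a minimizer onto the net then gives $\min_{\Theta_\alpha} G \leq \min_\Theta G + \bar{L}\alpha$ and $\min_{\Theta_\alpha}\hat{G}_{M,N} \leq \min_\Theta \hat{G}_{M,N} + \hat{L}_{M,N}\alpha$. Chaining these four inequalities shows that, on the event $\{\hat{L}_{M,N} \leq L'\}$,
\[
\{\hat{S}^{\delta}_{M,N} \not\subset S^{\epsilon}\} \;\subset\; \{\hat{S}^{\delta + L'\alpha}_{M,N}(\Theta_\alpha) \not\subset S^{\epsilon - 2\bar{L}\alpha}(\Theta_\alpha)\}.
\]
The choice $\alpha = (\epsilon-\delta)/(8L')$ with $L' > \bar{L}$ guarantees $(\epsilon - 2\bar{L}\alpha) - (\delta + L'\alpha) \geq (\epsilon-\delta)/2 > 0$, so Theorem~\ref{bound_finite} applies on $\Theta_\alpha$ with shrunken gap $(\epsilon-\delta)/2$ (when the complement $\Theta_\alpha \setminus S^{\epsilon - 2\bar{L}\alpha}(\Theta_\alpha)$ is empty, the probability is trivially zero, which I include via a harmless ``$+1$'' in the cardinality factor). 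This produces the first exponential in \eqref{e:impactN2}.

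\textbf{Concentration for $\hat{L}_{M,N}$.} To bound $\Q(\hat{L}_{M,N} > L')$, I mirror the two-scale structure of the proof of Theorem~\ref{bound_finite}, decomposing
\[
L(X^{k,l}, Y^k) - \bar{L} = \bigl(L(X^{k,l}, Y^k) - \mathbb{E}[L(X,Y)\mid Y^k]\bigr) + \bigl(\mathbb{E}[L(X,Y)\mid Y^k] - \bar{L}\bigr).
\]
By~\eqref{lipschitz_bounded1}, the first summand is bounded by $\ell_1$ and, conditionally on $Y^k$, independent across $l$; by~\eqref{lipschitz_bounded2}, the second summand is bounded by $\ell_2$ and i.i.d.\ across $k$. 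A Hoeffding bound applied conditionally on $(Y^k)_k$ (handling the inner average of $N$ sub-Gaussian contributions with proxy $\ell_1^2/N$) and then unconditionally (handling the outer average of $M$ sub-Gaussian contributions with proxy $\ell_2^2$) yields exactly
\[
\Q(\hat{L}_{M,N} > L') \leq \exp\Bigl(-\frac{M(L'-\bar{L})^2}{2(\ell_1^2/N + \ell_2^2)}\Bigr),
\]
which matches the second term of \eqref{e:impactN2}. A union bound of this estimate with the previous one, followed by taking the infimum over $L' > \bar{L}$ (which was only an auxiliary parameter), completes the argument.

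\textbf{Main obstacle.} The delicate step is the deviation bound for $\hat{L}_{M,N}$: one must correctly separate the within-block variance $\ell_1^2/N$ and the across-block variance $\ell_2^2$ so that the variance-reduction benefit of the $N$ inner simulations per outer path is reflected inside the exponent, exactly as the subgaussian split \eqref{subgaussian21}-\eqref{subgaussian22} does in Theorem~\ref{bound_finite}. A secondary bookkeeping difficulty is to calibrate $\alpha$ so that the shrunken gap $\epsilon' - \delta'$ on the net is a universal fraction of $\epsilon - \delta$ (here $1/2$), yielding the constant $8$ in both the covering and the exponent of \eqref{e:impactN2}.
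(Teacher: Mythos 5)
Your proposal is correct and follows essentially the same route as the paper: a covering-net reduction to Theorem~\ref{bound_finite} with the radius tuned so the shrunken gap on the net is a fixed fraction of $\epsilon-\delta$, combined with a two-scale (conditional then unconditional) Hoeffding/Chernoff bound on $\hat{L}_{M,N}$ exploiting \eqref{lipschitz_bounded1}--\eqref{lipschitz_bounded2}, and a union bound followed by the infimum over $L'>\bar{L}$. The only cosmetic differences are that the paper adjoins the population minimizer $\theta^\star$ to the net (which is the actual source of its ``$+1$'' and lets it lose only $L'\varrho$ rather than your $2\bar{L}\alpha$ on the population side), and that your covering count $(1+D/\alpha)^d$ is a factor of two more optimistic than the standard volumetric bound $(1+2D/\alpha)^d$ for a convex set of diameter $D$ that the paper uses — done rigorously, your finer net of radius $(\epsilon-\delta)/(8L')$ would turn the constant $8$ in the prefactor of \eqref{e:impactN2} into $16$ (while your larger residual gap $5(\epsilon-\delta)/8$ more than compensates in the exponent), a constant-factor discrepancy that does not affect the substance of the argument.
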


\begin{proof}
  See Section \ref{proof_bound_infinite}.~\finproof\\
\end{proof}

\noindent
The Lipschitz assumptions of Theorem~\ref{bound_infinite} are reasonable in our case since our neural network is Lipschitz with respect to its parameters, and its composition with the loss function remains Lipschitz if we assume that the parameters are bounded. In particular, these Lipschitz assumptions are satisfied in our learnings if we assume that {\em (i)} the processes $X$ and $Y$ are bounded (natively or after numerical truncation), {\em (ii)} the payoff function $f$ (cf.~\eqref{cond_exp_discretebis}), which is embedded in the loss function $g$, is Lipschitz continuous or bounded, and {\em (iii)} Lipschitz continuous activation functions are used in the neural networks.
%


The following result can help in selecting $M$ for reaching a target confidence level $(1-\alpha)$.
\bcor
\label{ss:choice2}
Let $0 < \alpha < 1$ and assume the conditions of Theorem~\ref{bound_infinite}. Choose any weight $u\in(0,1)$ and $L'>\bar{L}$. Then, for
\beql{e:optiM}  &\my = \max\{ \frac{8(b_1^2/\mx+b_2^2)}{(\epsilon-\delta)^2}\log(\frac{1}{u \alpha}\big((\frac{8 L' D}{\epsilon - \delta}+1)^d+1\big)), \frac{8(\ell_1^2/\mx+\ell_2^2)}{(L'-\bar{L})^2}\log(\frac{1}{(1-u) \alpha}) \},\eeql
we have \beql{e:pac}\mbox{$\hat{S}^{\delta}_{\my , \mx  } \subset S^{\epsilon}$ with probability at least $(1-\alpha)$.}\eeql
\ecor

\begin{proof}
By Theorem \ref{bound_infinite}, for \eqref{e:pac} to hold, it suffices that
\bel
&\big( ( \frac{8 L' D}{\epsilon - \delta}+1 )^d+1\big) \exp ( \frac{- \my(\epsilon - \delta)^2}{8 (\frac{b_1^2}{\mx  } + b_2^2 )} )  <  u\alpha \mbox{ and }\\
&\exp ( \frac{- \my(L'-\bar{L})^2}{2 ( \frac{\ell_1^2}{\mx  } + \ell_2^2 )} )  <  (1-u)\alpha,
\eel
which is verified by choosing $\my$ as  \eqref{e:optiM}.~\finproof\\
\end{proof}

\noindent
As the formula \eqref{e:optiM} for $M$ is decreasing in $N$, no matter how large $N$ is, $M$ has to be greater than the limit of \eqref{e:optiM} as $N\rightarrow\infty$, which provides a lower bound for $M$. This is natural as we do not expect to get an efficient sampling and good generalization just by increasing the number of realizations of $X$ only.

Corollary~\ref{ss:choice2} suggests the following:
\begin{heuristic}\label{heu:2}
In order to satisfy a constraint $\Q(\hat{S}^{\delta}_{\my , \mx  } \subset S^{\epsilon}) \geq 1-\alpha$ (instead of a target budget $\Psi$ in  Heuristic \ref{heur:MN}),  choose $N$ as in Heuristic~\ref{heur:MN} (since it is independent of the budget value),   then deduce $M$ from the formula \eqref{e:optiM}.
\end{heuristic}
\noindent

\noindent
In the data augmentation mindset framed before Remark \ref{rem:IS}, one could then set the size $\my$ of the market data $Y$ as a function of the \oversimulation factor $\mx$ and of the confidence level $(1-\alpha)$. In a context where collecting the market data $Y$ is expensive,  Heuristic \ref{heu:2} would thus allow the user
to 
benefit from the augmentation factor $\mx$ through a reduction of the size
$\my$ of the dataset for $Y$. However, making Heuristic \ref{heu:2} really practical would require to estimate the parameters $b_1$, $b_2$, $\ell_1$, $\ell_2$ and $\bar{L}$ in Corollary~\ref{ss:choice2}. 

\section{CVA Case Study\label{sec:cva}}

We illustrate our approach numerically by a CVA case study.  In this context the probability measure $\Q$ represents a risk-neutral measure chosen by the market, to which a model of financial risk factors is calibrated.

\subsection{Market and Credit Model\label{sec:mktcreditmodel}}
We consider a bank trading derivative contracts in different economies $e$ with various clients $c$.
The currency corresponding to the economy labeled by $0$ is taken as the reference currency. 
Let there be given the short rate process $r^{\left\langle \thee\right\rangle}$ in each economy $\thee$, as well as the exchange rate process $\chi ^{\left\langle \thee\right\rangle}$ from the currency of each economy $\thee\neq 0 $ to the reference currency. 
Each client $c$ of the bank has a stochastic default intensity process $\gamma^{\left\langle c\right\rangle}$ and a default-time $\tau^{\left\langle c\right\rangle}$. For notational convenience we also define $\chi ^{\left\langle 0\right\rangle}=1$ and we denote by $\gamma^{\left\langle 0\right\rangle}$ the default intensity of the bank itself.
We consider an Euler-Marayama time-discretization of the model in Section \ref{sec:mktcreditmodelcontinuous}. We 
use the same notation for the continuous-time processes and
their discrete-time approximations (with time-step equal to $1\ \text{year}$ to alleviate the notation).

\brem\label{rem:finecoarse} 
In practice, the time-discretizations are stepping through a refined simulation time grid. This simulation grid is also used when integrating numerically some of the above diffusions, e.g.~the default intensities in \eqref{e:tau}, 
or for defining risk-neutral discount factors ${\beta}_{\tpi }$
associated with the reference currency by approximating $(-\ln{\beta}_{\tpi })$ using numerical integration\footnote{It is also possible to jointly simulate exactly $r^{\left\langle 0\right\rangle}$ and its integral without the need for numerical integration, see for example \citeN{glasserman2004monte}.} of $r^{\left\langle 0\right\rangle}$ on $[0, i]$.
Learning, pricing and checking for default events, instead, are only done at the coarser pricing time steps. Hence, although we step through the fine time grid in our discretized diffusions, we only need to store the values of the processes at the pricing time steps.
 \erem

We 
define $X$ as the collection of all the default indicator processes of the clients $c$
and $Y$ as the collection of all the short interest rate, FX, and default intensity processes $r$, $\chi$ and $\gamma$ (except for the instrumental $\chi ^{\left\langle 0\right\rangle}=1$), endowed with the filtration generated by the innovation in the model, i.e.~the collection of all the
Gaussian
and exponential variables involved at the increasing time steps $i\in\{1, \dots, n\}$. Note that both $Y$ (by itself) and $(X,Y)$ (jointly) are Markov processes with respect to this filtration.


\subsection{Learning the CVA}\label{ss:leCVA}

We denote by $\mathrm{MtM}^{\left\langle c\right\rangle}_i$ the mark-to-market at time $i$, from the point of view of the bank and in units of the reference currency, of all the contracts with the client $c$. By mark-to-market we mean trade additive counterparty-risk-free valuation, i.e.~the risk-neutral conditional expectation of the future contractually promised cash flows, expressed in units of the reference currency and discounted at the risk-free rate $r^{\langle 0 \rangle}$. We restrict ourselves to interest-rate derivatives for which mark-to-market valuation at $i$ is a function of $Y_i$,
by the nature of the cash-flows and the Markov property of $Y$\footnote{see Remark \ref{rem:pathdep}.}.
The CVA of the bank then corresponds to the risk-neutral conditional expectation of its future risk-free discounted client default losses. Namely,
the  
CVA of the bank  at the time step $\si $ is given by\footnote{Assuming that the netting set for each client is the whole set of transactions with this client.} 
\begin{equation}\label{eq:CVA_portfolio}
\mathrm{CVA}_{\si } =
\mathbb{E}\Big[\sum_{c} \sum_{j=i}^n \beta_i^{-1}\beta_{j+1}({\mathrm{MtM}}_{j+1}^{\left\langle c\right\rangle})^+\mathbbm{1}_{j<\tau^{\left\langle c\right\rangle}\leq j+1}
\Big| X_i, Y_i
\Big].
\end{equation}
\noindent
Hence
$\mathrm{CVA}_{\si }=\varphi^{\star}_i(X_i, Y_i) $, where 
\begin{equation}\label{eq:onemore}\varphi^{\star}_i \in \argmin_{\varphi \in\cB( \mathbb{R}^{p+q})} \mathbb{E}\big[(\sum_{c}\sum_{j=i}^n \beta_i^{-1}\beta_{j+1}({\mathrm{MtM}}_{j+1}^{\left\langle c\right\rangle})^+\mathbbm{1}_{j <\tau^{\left\langle c\right\rangle}\leq j+1}-\varphi(X_i, Y_i))^2\big]  .\end{equation}

We also mention the following default intensity-based formula for the CVA of the bank (cf.~\citet[Eq.~(60)]{CrepeyHoskinsonSaadeddine2019}):
\begin{equation}\label{eq:CVAint}
\widetilde{\mathrm{CVA}}_i = 
\mathbb{E}\Big[\sum_c \sum_{j=i}^{ {\thenb-1}} {\beta}_{i }^{-1}{\beta}_{j}({\mathrm{MtM}}_{j}^{\left\langle c\right\rangle})^+\gamma^{\left\langle c\right\rangle}_{j}
\exp\big(-\sum_{s=i}^ {j-1} \gamma^{\left\langle c\right\rangle}_s
 \big) \mathbbm{1}_{\{\si <\tau^{(c)} \}}
\Big| X_i, Y_i
\Big], 
\end{equation}
which converges to the same continuous-time limit as $\mathrm{CVA}_i$ when the time discretisation
  step\footnote{Conventionally set to 
  one
  in this paper.}
  goes to zero. Hence
$\widetilde{\mathrm{CVA}}_i= \tilde{\varphi
}
_i
(X_i, Y_i) $, where {(cf.~\eqref{e:genregr})}
\beql{e:onecliint} \tilde{\varphi}
_i
 \in \argmin_{\varphi \in\cB( {\mathbb{R}^{p+q}} )} \mathbb{E}
  \Big[\sum_c \sum_{j=i}^{{\thenb-1} } {\beta}_{\tpi }^{-1}{\beta}_{\tjp  }({\mathrm{MtM}}_i^{\left\langle c\right\rangle})^+
\gamma^{\left\langle c\right\rangle}_{j} 
\exp(-\sum_{s=i}^ {j-1} \gamma^{\left\langle c\right\rangle}_s 
 ) \mathbbm{1}_{\{\si <\tau^{(c)} \}}   
-\varphi(X_i,Y_i)\Big]^2 
.\eeql  
  
We reiterate that Algorithm~\ref{alg:backward_algo} with \oversimulation of  $(X,Y)$
  is generically applicable to all the XVA metrics.
The focus on the CVA in our case study is for benchmarking purposes. Were it for the CVA only,
  the regression learning scheme with minimal variance is obviously the one based on \eqref{eq:CVAint}. Also note that learning the CVA client by client, exploiting the linearity of the conditional expectation for this purpose, would lead to  as many regressions as there are clients of the bank, which for realistic banking portfolios would be extremely inefficient.
  

On top of the learning schemes \eqref{e:onecliint} and \eqref{eq:onemore} associated with the formulations
\eqref{eq:CVAint} and \eqref{eq:CVA_portfolio}, another computational alternative
in each case is nested Monte Carlo as detailed in \citeN{abbas2018xva}. This variety of approaches is useful for benchmarking purposes.
 Specifically, we implemented the learning procedure of Algorithm \ref{alg:backward_algo} in PyTorch with custom CUDA kernels for label generation during the backward iterations, the way detailed in Section \ref{ss:pycuda} (and in the accompanying github repository). In addition we implemented an optimized CUDA benchmark involving nested simulations, using the intensity-based formulation \eqref{eq:CVAint} for the inner CVA computations.
For the nested Monte Carlo,
in consideration of
the square-root rule recalled in \citeN[Section 3.3]{abbas2018xva}, 
we used $128$ inner paths. The nested Monte Carlo CVA is only computed at few pricing times due to the heavy calculation.

\subsection{Preliminary Learning Results Based on IID Data\label{sec:res}}

In the following experiments, we assume that the bank is trading derivatives in $10$ economies with $8$ clients. Implementing the discretized market and default model, we get a total of 10 interest rates, 9 cross-currency rates, and 8 default intensities. 
This yields $27$ diffusive market risk factors and $8$ default indicator processes. For time-stepping, we use $n=100$ pricing time steps and 
$25$ simulation sub-steps per pricing time step (see Remark~\ref{rem:finecoarse}). We consider a portfolio of $500$ interest rate swaps with random characteristics (notional, currency and counterparty), the MtM$^{\left\langle c\right\rangle}$ are thus analytic. All swaps are priced at par at inception. For all the runs of the simulations in this section, whether they be for training or testing, we use $\my=16384$ paths for the market risk factors $Y$.

The comparison between the two panels of 
Figure \ref{fig:intensity_vs_indicator} reveals a difficulty with the neural net learning approach of Algorithm \ref{alg:backward_algo} applied to the  defaults-based formulation \eqref{eq:CVA_portfolio} on the basis of i.i.d. simulated data. In this case, represented by the left panel in Figure \ref{fig:intensity_vs_indicator}, 
the network only learns a rather crude and noisy approximation of the CVA conditional to each training time: it is only
on the mean that 
 the learned CVA  
agrees with the nested Monte Carlo estimator;
on the tails it largely fails. As visible from the right panel, the CVA learned
using the intensity-based formulation, instead, yields satisfactory results on a  wide range of quantiles of the targeted distribution. 
\begin{figure}
\centering
\includegraphics[scale=0.55]{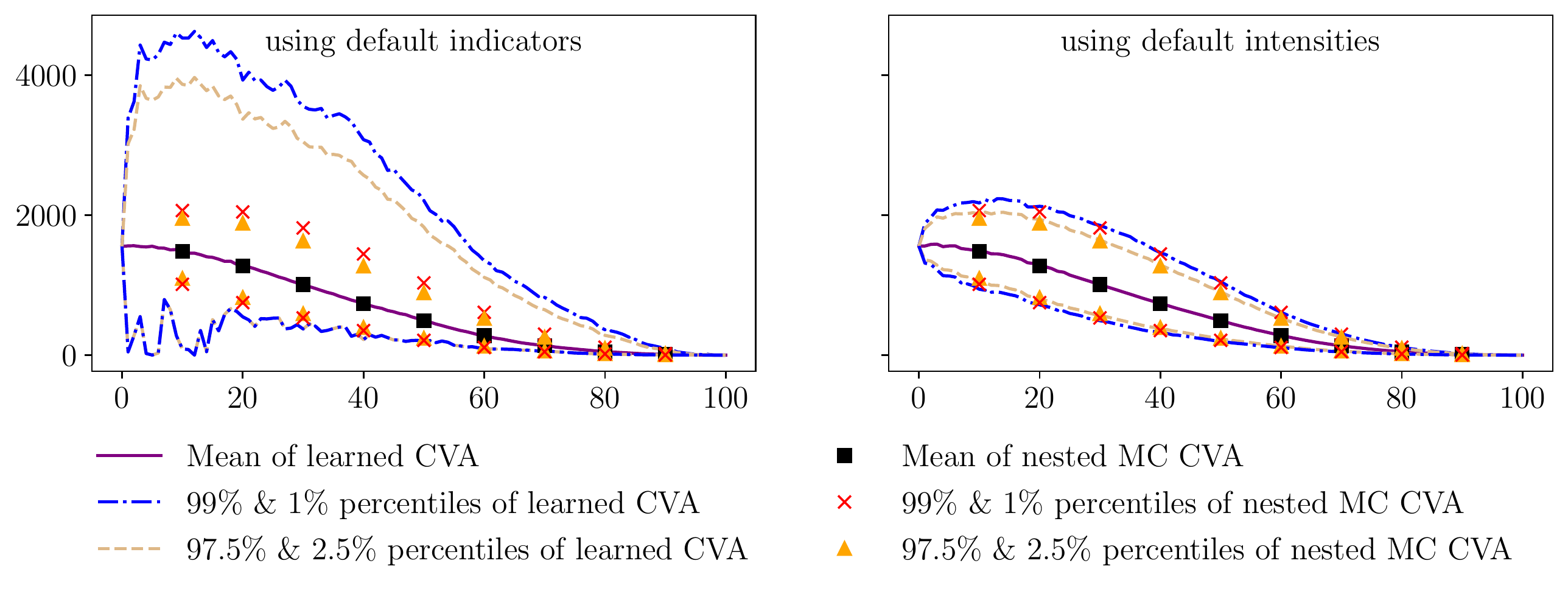}
 \caption{CVA learned using default indicators versus using default intensities ($x$ axis pricing times, $y$ axis CVA levels). Statistics computed using out-of-sample paths.}
 \label{fig:intensity_vs_indicator}
\end{figure}

\subsection{Learning Results Based on Hierarchically Simulated Data\label{sec:cva2}}
 
\def\thene{Na_{\text{rates}}} \def\thene{n_e}
\def\thenc{\thenc }\def\thenc{n_c}
In order to improve the learning \eqref{eq:onemore} of the defaults-based CVA \eqref{eq:CVA_portfolio}, we apply to it the hierarchical simulation technique of Section \ref{s:hierarch}.
Let $(r^1 ,\chi^1, \gamma^1), \dots, (r^{\my} ,\chi^{\my}, \gamma^{\my})$, be i.i.d sample paths of the triple of processes  $(r,\chi,\gamma)$.
Let $\{\epsilon^{k, l}, 1\leq k\leq \my, 1\leq l\leq \mx\}$ be i.i.d samples of $\epsilon$, the vector
defined by the right-hand side in \eqref{e:tau} where $c$ ranges over clients.
Then we can define $\my\mx$ samples of the vector of the default indicator processes of the clients at every pricing time $i$ based on~\eqref{e:tau}.
Figure~\ref{fig:defaults_simulation} illustrates the ensuing simulation scheme for the default indicator of a generic client of the bank, with sampled default times $\tau^{k,l}.$

\begin{figure}[!htbp]
\centering
\includegraphics[scale=0.6]{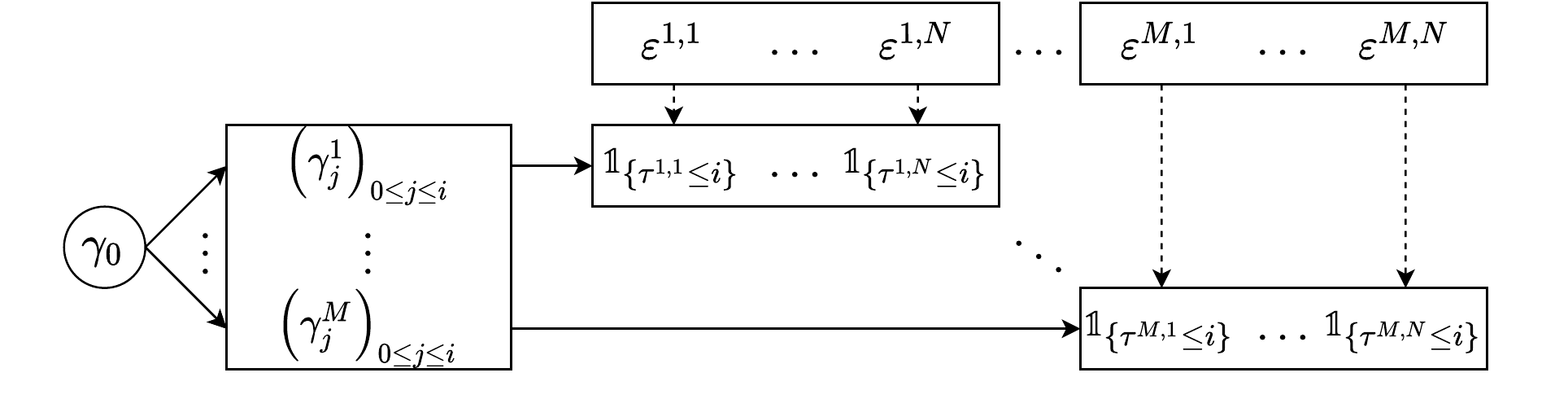}
\caption{Default simulation scheme}
\label{fig:defaults_simulation}
\end{figure}

We then learn the CVA process at different time steps for the whole portfolio at once based on \eqref{eq:CVA_portfolio}, trying different 
combinations of the number of market paths $\my$ and of the \oversimulation factor $\mx$.
Figures~\ref{fig:learning_vs_nmc_rmse_matrix} and \ref{fig:sim_exec_time_cpu_meme_matrices} show the
relative RMSE of the trained neural network against the nested Monte Carlo benchmark\footnote{RMSE restricted to the realizations where the benchmark is non-zero.}, the simulation and training times on the GPU and the host RAM  usage, 
as functions of the number of diffusion paths $M$ and of the \oversimulation factor $N$. 
For the execution times in Figure~\ref{fig:sim_exec_time_cpu_meme_matrices}, the runs were done on a server with an Intel Xeon Gold 6248 CPU and 4 Nvidia Tesla V100 GPUs (out of which we used only one). For performance comparison reasons, we use for all $(M, N)$ configurations the same number of epochs $E=8$ and number of batches $|\mathcal{B}|=32$, which yields a total of $256$ stochastic gradient descent steps during any training task. From Figures~\ref{fig:learning_vs_nmc_rmse_matrix} and \ref{fig:sim_exec_time_cpu_meme_matrices},
we already see some configurations $(\frac{1}{2} M, N)$ being better than $(M, \frac{1}{2} N)$, as they achieve a similar accuracy with  less memory footprint and computational time. For example, $(32768, 1024)$ is better than $(65536, 512)$, given that the former achieves a similar RMSE of 0.07 but is 30\% faster to simulate and price, while also occupying 23\% less CPU memory. In addition, to the credit of the twin Monte Carlo validation procedure of Section \ref{ss:twin}, the comparison between Figures \ref{fig:learning_vs_nmc_rmse_matrix} and \ref{fig:learning_l2err_rmse_matrix} shows that the error estimates provided by the latter are very much in line with the ones\footnote{in spite of the slight variation in the way these errors are computed (cf.~the captions of Figures~\ref{fig:learning_vs_nmc_rmse_matrix} and \ref{fig:learning_l2err_rmse_matrix}).} provided by a much heavier nested Monte Carlo procedure (which would become unfeasible on more complex problems).
\begin{figure}[!htbp]
\centering
\includegraphics[scale=0.4]{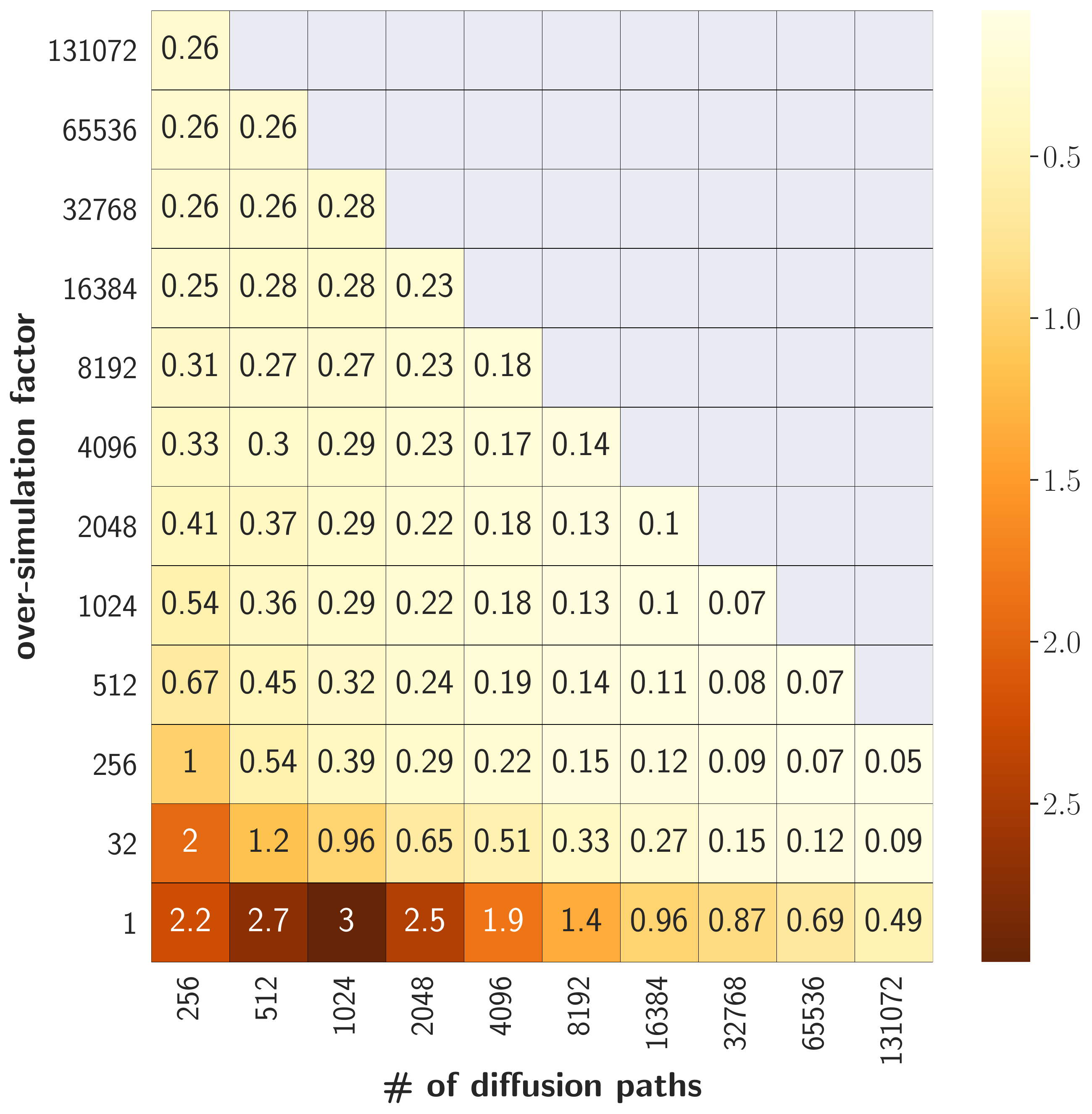}
\caption{Relative RMSE of the prediction against a nested Monte Carlo benchmark at the pricing time $i=5$ years, for different combinations of the number of market paths $\my$ and of the \oversimulation factor $\mx$, when the nested Monte Carlo benchmark is non-zero. The error here is a Monte Carlo estimate of $\sqrt{\mathbb{E}\left[\left(\frac{\mathrm{CVA}_{\text{pred}}-\mathrm{CVA}_{\text{nested}}}{\mathrm{CVA}_{\text{nested}}}\right)^2\right]}$ where $\mathrm{CVA}_{\text{pred}}$ is the CVA estimate predicted by the considered neural network given a state of market and default factors and $\mathrm{CVA}_{\text{nested}}$ is a nested Monte Carlo estimator given the same state.
}
\label{fig:learning_vs_nmc_rmse_matrix}
\end{figure}

\begin{figure}[!htbp]
\centering
\includegraphics[scale=0.35]{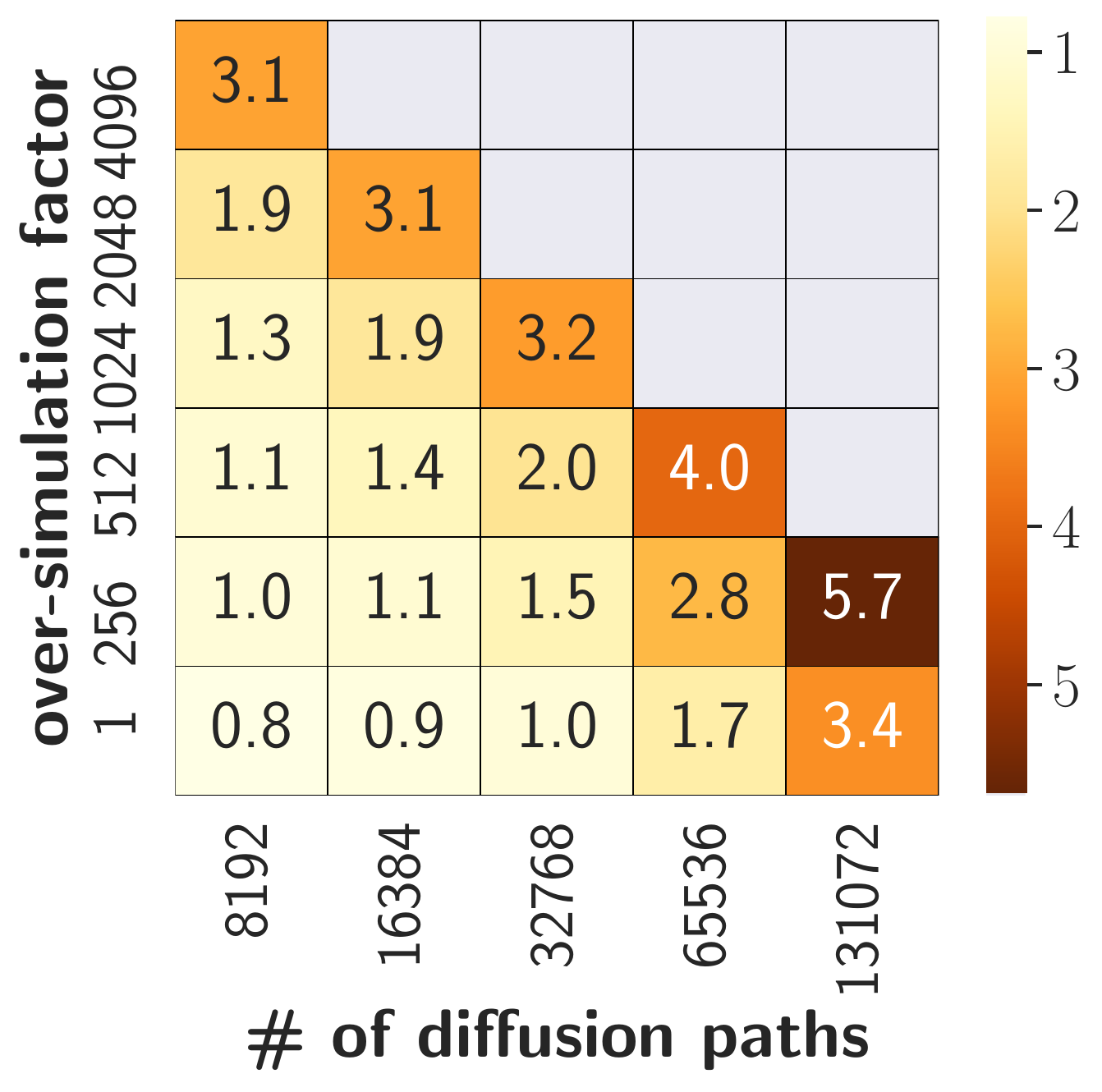}\includegraphics[scale=0.35]{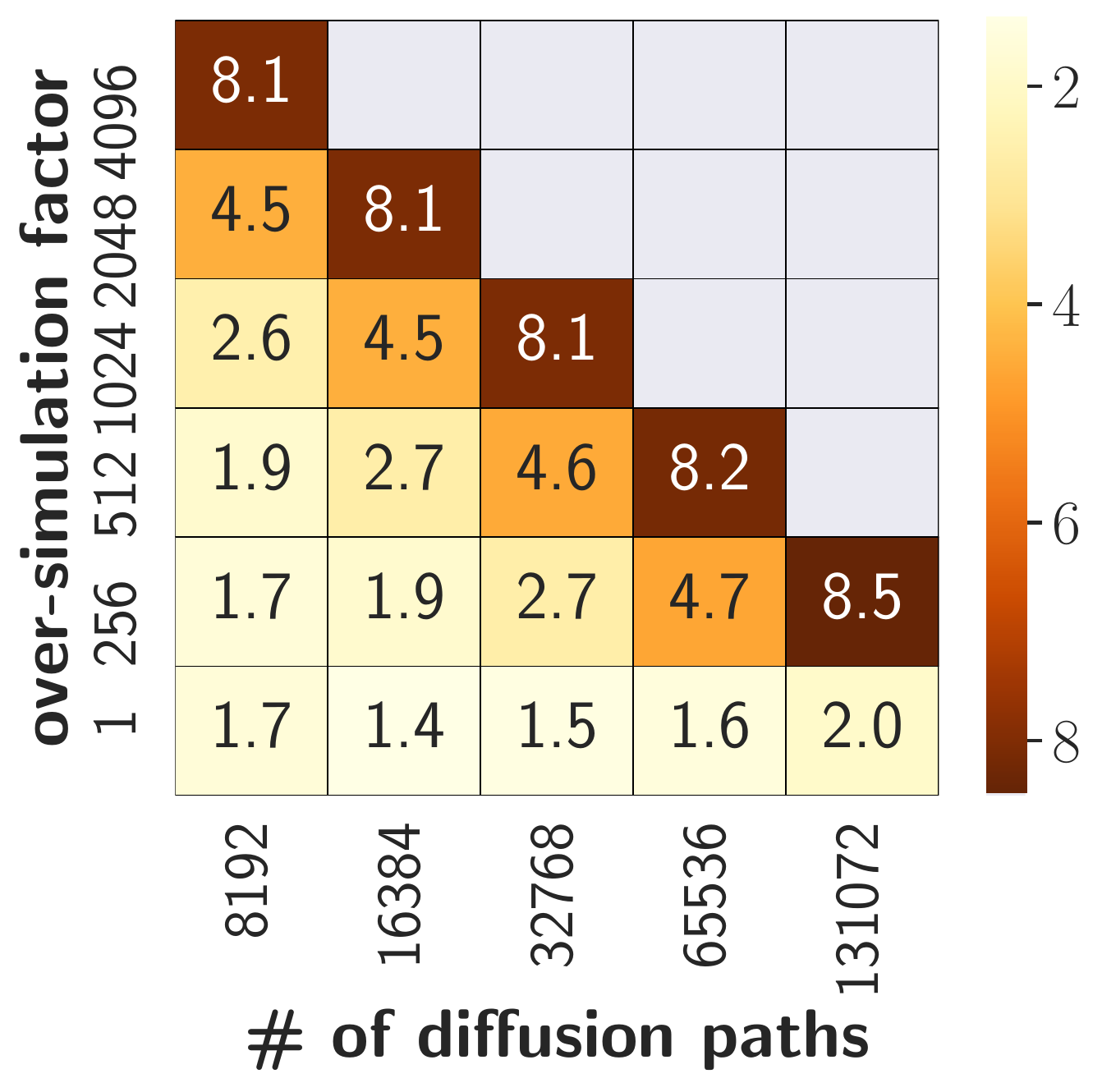}\includegraphics[scale=0.35]{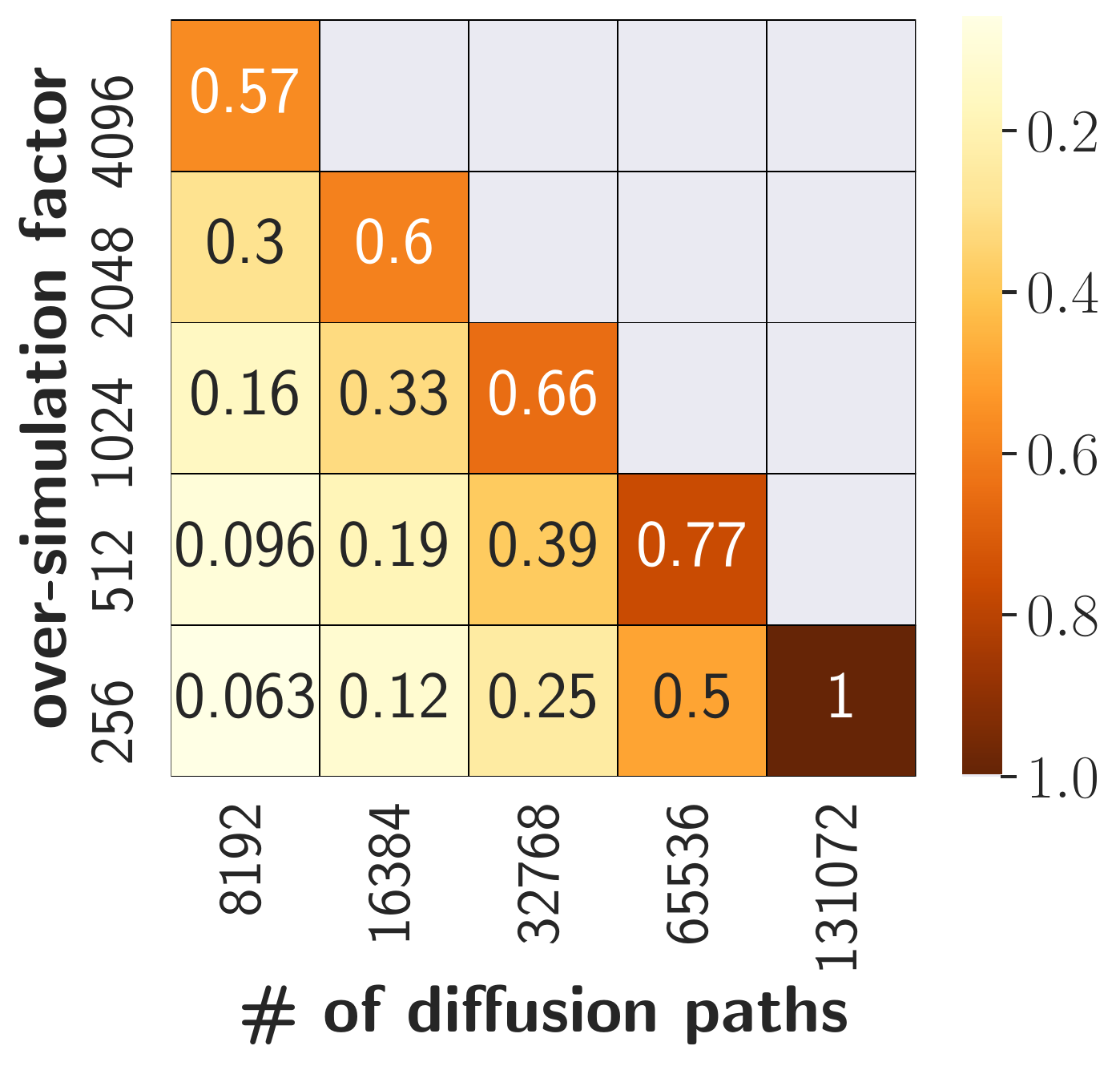}
\caption{Simulation times in seconds (left)  and training times in minutes (center) and RAM usage as a \% of its maximum usage over all the displayed experiments (right), for different combinations of the number of market paths $\my$ and of the \oversimulation factor $\mx$.}
\label{fig:sim_exec_time_cpu_meme_matrices}
\end{figure} 

\begin{figure}[!htbp]
\centering
\includegraphics[scale=0.4]{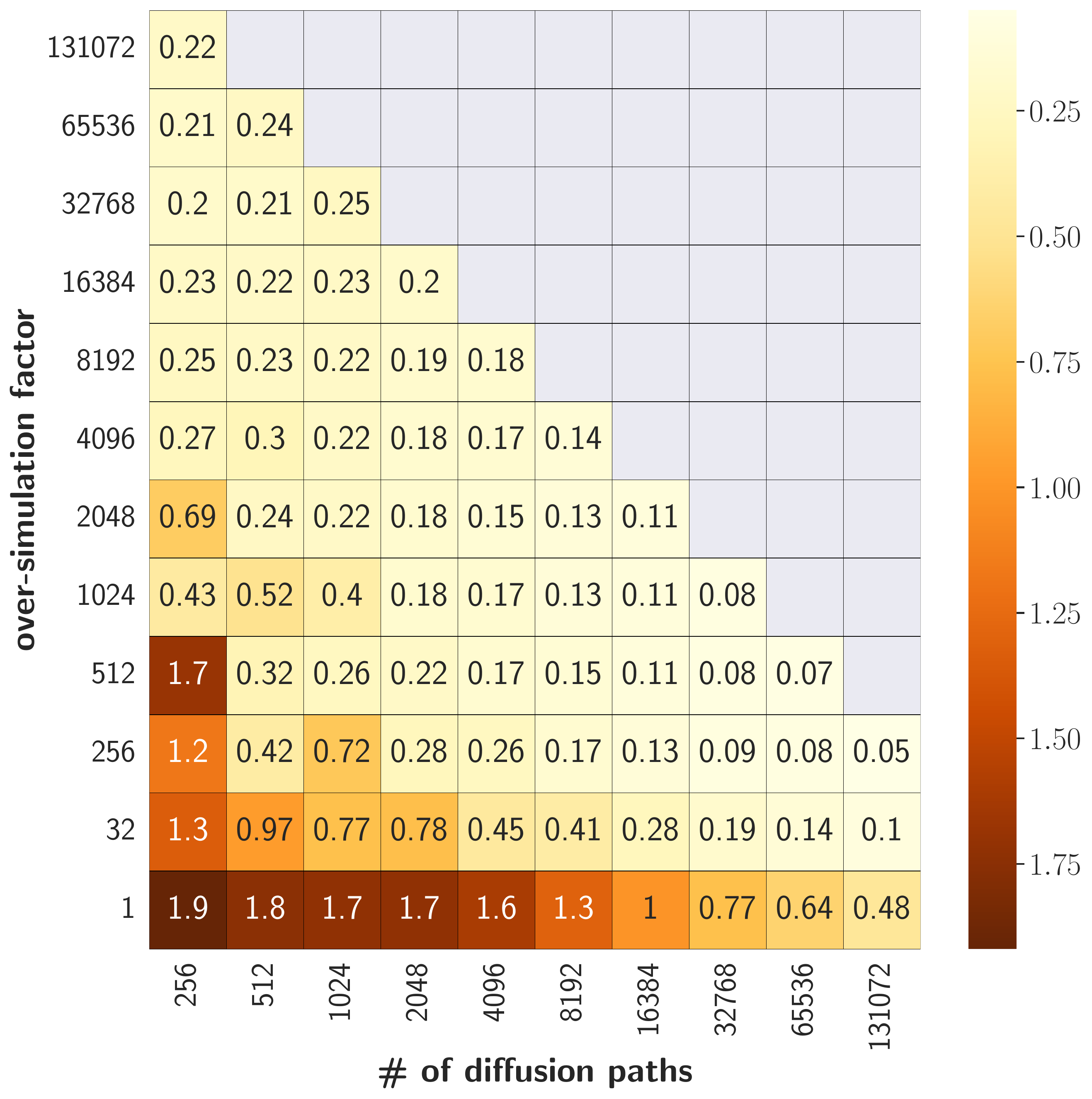}
\caption{Relative RMSE of the prediction against the ground-truth CVA at the pricing time $i=5$ years, computed by the twin simulation approach of Section~\ref{ss:twin} for different combinations of the number of market paths $\my$ and of the \oversimulation factor $\mx$. The error here is a Monte Carlo estimate of $\sqrt{\frac{\mathbb{E}\left[\left(\mathrm{CVA}_{\text{pred}}-\mathrm{CVA}_{\text{exact}}\right)^2\right]}{\mathbb{E}\left[\mathrm{CVA}_{\text{exact}}^2\right]}}$, where $\mathrm{CVA}_{\text{pred}}$ is the CVA estimate predicted by the considered neural network given a state of market and default factors and $\mathrm{CVA}_{\text{exact}}$ is the exact CVA (the value of which does not need to be computed, by virtue of~\eqref{e:val1}-\eqref{e:val2}), given the same state.}
\label{fig:learning_l2err_rmse_matrix}
\end{figure}
The dominance of the impact of the variance of $X$ on that of $\xi$ has been demonstrated in Figure~\ref{fig:ard-boxplot}.
Figure  \ref{f:optom} shows the  $\sqrt{\frac{Q^{\theta}_i
}{R^{\theta}_i}}$ (cf.
 \eqref{e:optom}) obtained in the base case $\mx=1$.
\begin{figure}[h]
  \centering
  \hspace{-1cm}
  \includegraphics[scale=0.55]{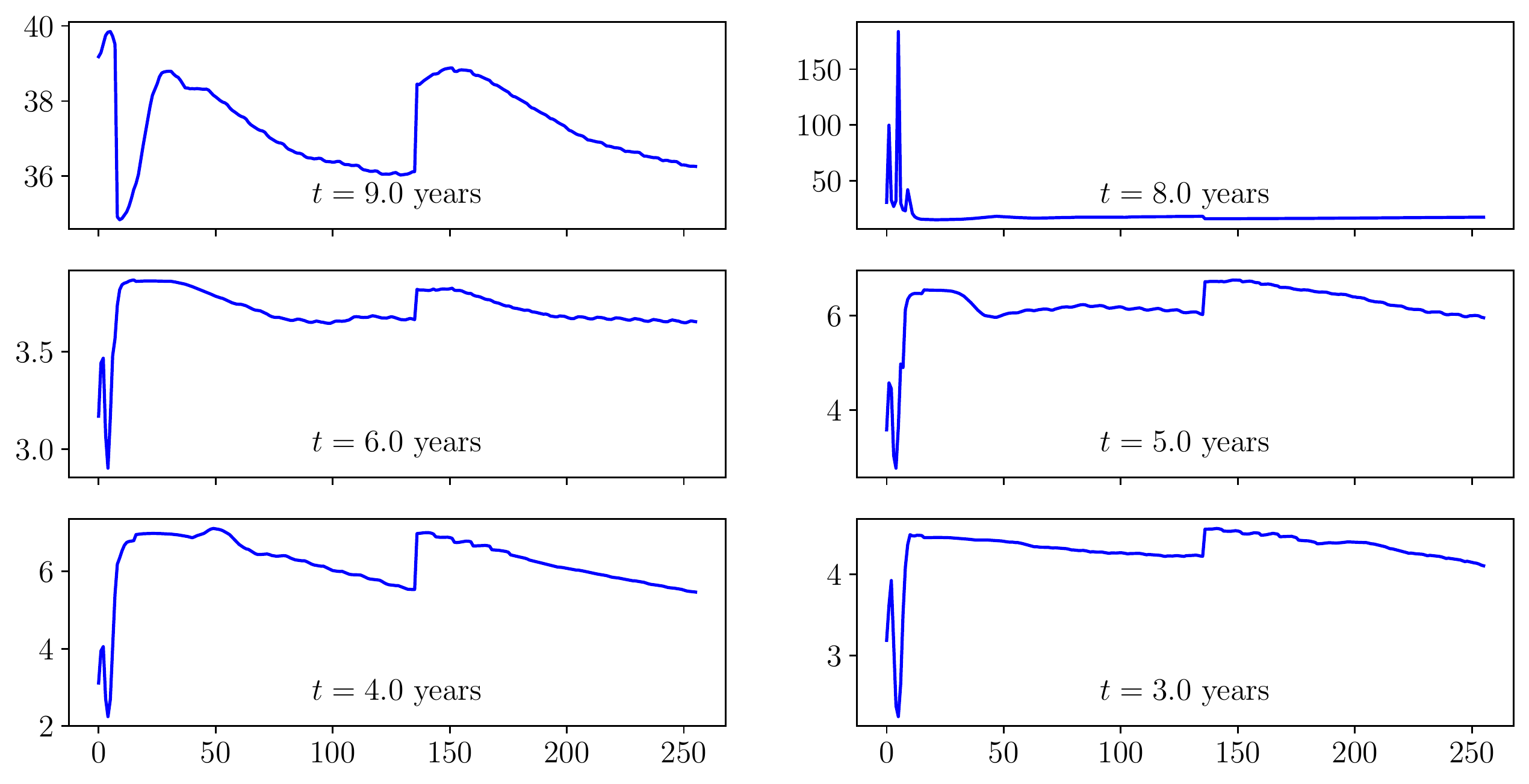}
   \caption{$\sqrt{\frac{Q^{\theta}_i
}{R^{\theta}_i}}$ at different pricing time steps $i$ (panels) and SGD iterations 
($x$ axes).}
  \centering
  \label{f:optom}
\end{figure}  
The values fluctuate quite significantly both in pricing time steps $i$ and SGD iterations. For the purpose of applying Heuristic \ref{heur:MN} (which can be applied for any $N$ but the performance of which needs to be assessed carefully), we retain
a rough average order of magnitude of the order of a few tens.
 To obtain from the $\sqrt{\frac{Q^{\theta}_i
}{R^{\theta}_i}}$ the $\mx^\theta_i$ in \eqref{e:optom}, one needs
  to multiply them by $\sqrt{\kappa}$ (e.g.~if a market simulation is 100 times slower than an ensuing	 default
    simulation, then the factors displayed in Figure \ref{f:optom} must be multiplied by
    10). 
   Solving the equation \eqref{l:ratio} for $P$
on the basis of the
columns $M=65536$ in Figures  \ref{fig:learning_vs_nmc_rmse_matrix}-\ref{fig:sim_exec_time_cpu_meme_matrices} yields $P \approx 497$. So the numbers in Figure~\ref{f:optom} need to be multiplied by $\sqrt{497}\approx 22.3$ to get the optimal $\mx$ as per Heuristic \ref{heur:MN}. 
In view of this, we expect an optimal \oversimulation factor $\mx$ of the order of 
a few hundreds.
  
Path-wise CVA estimators learned for $\mx=1, 32, 64, 128, 512$ are shown in Figure~\ref{fig:table_impact_of_oversim}, which are to be compared to the right plot in Figure~\ref{fig:intensity_vs_indicator} obtained when learning the CVA relying on the intensity-based
formula \eqref{eq:CVAint}. In line with the above expectations, one needs $\mx=512$ in order to have a close enough match between the 1, 2.5, 97.5 and 99-th percentiles of the CVA learned from defaults and those of the nested Monte Carlo estimator (or of the intensity-based CVA learner represented by the right panel in Figure \ref{fig:intensity_vs_indicator}). 
\begin{figure}[h]
\centering
\includegraphics[scale=0.55]{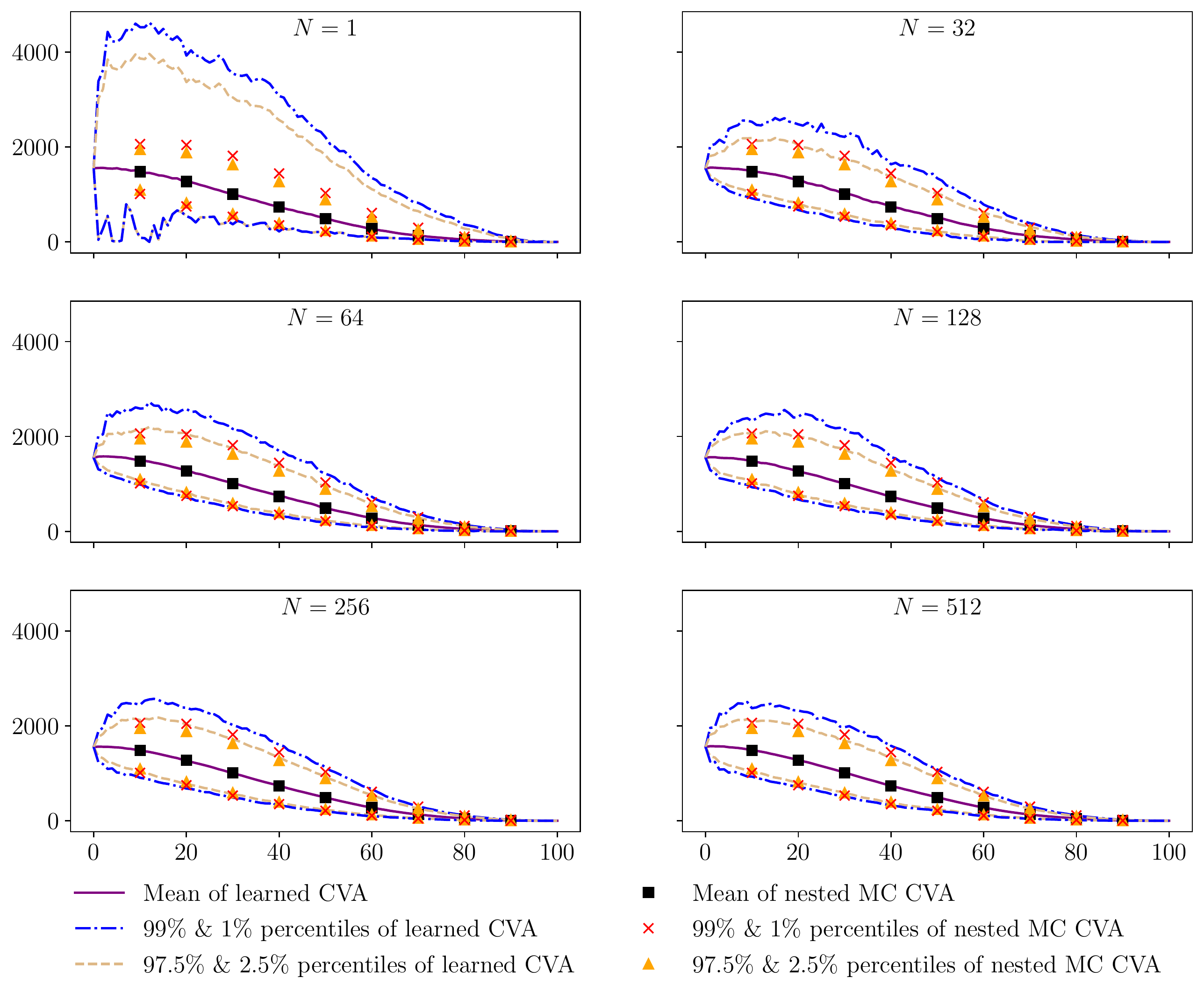}
\caption{
Learned and nested Monte Carlo CVA processes for various \oversimulation factors $N$ ($x$-axis: pricing times, $y$-axis: CVA levels; $\my=16384$). Statistics computed using out-of-sample paths.}
\label{fig:table_impact_of_oversim}
\end{figure}

The above results show that \oversimulation is essential to a defaults-based CVA learner.  The efficiency of Heuristic \ref{heur:MN} is attested numerically on the  double  basis of our nested 
Monte Carlo benchmark (Figure \ref{fig:learning_vs_nmc_rmse_matrix}) and of the twin Monte Carlo validation procedure of Section
\ref{ss:twin}  (Figure \ref{fig:learning_l2err_rmse_matrix}). The fact that Heuristic \ref{heur:MN} 
 already has merit in such case (with not  so stable $\sqrt{\frac{Q^{\theta}_i
}{R^{\theta}_i}}$ ) can be put to the credit of the method.

 \subsection{Conclusion}
The bottom row of Figure \ref{fig:learning_vs_nmc_rmse_matrix}
and the first plot ($\mx=1$) of  Figure \ref{fig:table_impact_of_oversim} illustrate that
 a path-wise CVA cannot be learned based on the hybrid market and defaults formulation \eqref{eq:CVA_portfolio} without \oversimulation: For $\my=16384$ and 131072, the corresponding errors with respect to the benchmark nested Monte Carlo are 96\% and 49\%. However, increasing $\mx$ from 1 (bottom row) to 256
  brings these errors down to 11\% and to 5\%, while for $\my=1024$ and $\mx=512$ the error is 1\%. As visible from Figure \ref{fig:sim_exec_time_cpu_meme_matrices}, the simulation times are only marginally increased when increasing the  \oversimulation factor $\mx$ (while increasing the number of diffusion paths $M$ increases the simulation time approximately proportionally). These results show that the \oversimulation technique is key to the success of a learning approach involving a combination of diffusive and default risk factors.

Even after writing an optimized GPU implementation for the nested Monte Carlo estimator, the latter takes at least 32 minutes on the same hardware as above to compute that estimator for $M=16384$ and $\sqrt{M}=128$ inner paths\footnote{However, when doing the error computations and in all plots, we used $1024$ inner paths to get benchmark CVAs that are sufficiently accurate {\itshape point-wise} and be able to get accurate tail estimates, and nested Monte Carlo simulation thus takes 8 times more computation time.}, compared to approximately 8 minutes in the case of the learning approach with a very high hierarchical simulation factor ($N=2048$). Moreover, going to higher XVA layers such as the FVA and the KVA, a nested Monte Carlo approach would become $\sqrt{\my}$ times slower per each new layer \citep[Section 3.3]{abbas2018xva}, whereas a regression approach would just become slower by a constant each time a new XVA layer is added. In addition, learned XVA metrics can be used in prediction at a very low cost (inference is very fast as it involves no automatic differentiation or stochastic gradient descent), whereas nested Monte Carlo numbers must be recomputed from scratch every time.

In a follow-up paper, the learning-based hierarchical simulation scheme of this paper will be deployed systematically on the whole suite of the XVA  metrics and of the embedded risk measures, resorting to the twin Monte Carlo procedure of Section \ref{ss:twin} for the related validation task (fault of a feasible nested Monte Carlo benchmark beyond the CVA case of this paper).

\appendix

\section{Technical Proofs}
The following proofs use arguments from \citeN{shapiro2014lectures} and extend similar results to the conditionally independent, \oversimulation case. Theorem \ref{bound_finite} extends the finite case in \citeN[Section~5.3.1]{shapiro2014lectures}. The major modifications in the proof are the use of a conditional moment generating function, the establishment of a large deviation upper-bound based on it, and the strict convexity of $\log(\mathbb{E}[\mathcal{M}(\frac{t}{N}, Y)^N])$ with respect to $t$ that becomes more technical in the conditional case.  Then, similar to \citeN[Section~5.3.2]{shapiro2014lectures}, Theorem \ref{bound_infinite} extends these results to the infinite and bounded case, by Lipschitz continuity arguments.  In both cases we rely on the following:


\bl
  \label{largedev_oversim}Let $\varphi : \mathcal{X} \times
  \mathcal{Y} \rightarrow R$ be such that $\varphi (X, Y)$ is integrable, does not degenerate to a constant
  and that, for all $z \in \mathbb{R}$ and $y \in V$, $\mathcal{M} (z, y) \coloneqq \mathbb{E} [\exp (z
  \varphi (X, Y)) |Y = y]$ is well-defined. Then the Fenchel
  conjugate  $I_\mx : a\mapsto
 \sup_{t \in
     \mathbb{R}} \{ ta - \log ( \mathbb{E} [ \mathcal{M} (
     \frac{t}{\mx  }, Y )^\mx   ] ) \}    $ of $t \mapsto \log ( \mathbb{E} [ \mathcal{M} ( \frac{t}{\mx  },
  Y )^\mx   ] )$ is well-defined and
  \begin{eqnarray}\label{e:fench} && \frac{1}{\my  } \log \{ \Q ( \frac{1}{\my\mx}  \sum_{k = 1}^\my  
     \sum_{l = 1}^\mx   \varphi (X^{k, l}, Y^k) \geq a ) \} \leq - I_\mx  
     (a)\;\;\;  \mbox{ for all } a >\mathbb{E} [\varphi (X,
  Y)]  \\&& I_\mx   (a) = \frac{(a -\mathbb{E} [\varphi (X, Y)])^2}{\frac{1}{\mx  }
     \mathbb{E} [\tmop{Var} (\varphi (X, Y) | Y)] + \tmop{Var} (\mathbb{E}
     [\varphi (X, Y) |Y])} + o(| a -\mathbb{E} [\varphi (X, Y)] |^2)\label{e:taylor}
      .\end{eqnarray}
\el

\begin{proof}
  Let $t > 0$. Applying the Markov inequality, we have:
  \begin{eqnarray}
    \Q ( \frac{1}{\my\mx}  \sum_{k = 1}^\my   \sum_{l = 1}^\mx   \varphi
    (X^{k, l}, Y^k) \geq a ) & = & \Q ( \exp (
    \frac{t}{\mx  }  \sum_{k = 1}^\my   \sum_{l = 1}^\mx   \varphi (X^{k, l}, Y^k) )
    \geq \exp (\my  ta) ) \nonumber\\
    & \leq & \exp (- \my  ta) \mathbb{E} [ \exp ( \frac{t}{\mx  }  \sum_{k
    = 1}^\my   \sum_{l = 1}^\mx   \varphi (X^{k, l}, Y^k) ) ] .\label{e:markov}
  \end{eqnarray}
  For every $i \in \{ 1, \ldots, \my   \}$, denote
$Z^i_{\mx} \coloneqq \exp ( \frac{t}{\mx  }  \sum_{j = 1}^\mx   \varphi (X^{i,
     j}, Y^i) )$.
  By using the tower property repeatedly, one can show recursively that for all $i\in\left\{1, \dots, \my\right\}$, denoting  $\mathcal{Z}_i \coloneqq \sigma (Z^1_{\my  , \mx  }, \ldots, Z^{\my   - i}_{\my  , \mx  },
  Y^{\my   - i + 1}, \ldots Y^\my  )$:
  \begin{eqnarray}
    \mathbb{E} [ \exp ( \frac{t}{\mx  }  \sum_{k = 1}^\my   \sum_{l = 1}^\mx  
    \varphi (X^{k, l}, Y^k) ) ] 
    & = & \mathbb{E} [ \mathbb{E} [ \big(\prod_{k = 1}^{\my   - i + 1}
    Z^k_{\mx}\big)  \big(\prod^{i - 1}_{k = 1} \mathcal{M} ( \frac{t}{\mx  }, Y^{\my   - k
    + 1} )^\mx\big)   |\mathcal{Z}_i ] ] \nonumber\\
    & = & \mathbb{E} [ \big(\prod_{k = 1}^{\my   - i} Z^k_{\mx}\big)  \big(\prod^{i -
    1}_{k = 1} \mathcal{M} ( \frac{t}{\mx  }, Y^{\my   - k + 1} )^\mx\big)  
    \mathbb{E} [Z_{\mx}^{\my   - i + 1} |Y^{\my   - i + 1}] ] \nonumber\\
    & = & \mathbb{E} [ \big(\prod_{k = 1}^{\my   - i} Z^k_{\mx}\big)  \big(\prod^i_{k = 1}
    \mathcal{M} ( \frac{t}{\mx  }, Y^{\my   - k + 1} )^\mx\big)   ]
    \nonumber.\\
  \end{eqnarray}
In particular, this identity for $i=M$ yields (recalling $Y$ and the $Y^k$ are i.i.d.)
  \[\mathbb{E} [ \exp ( \frac{t}{\mx  }  \sum_{k = 1}^\my   \sum_{l = 1}^\mx  
    \varphi (X^{k, l}, Y^k) ) ] = ( \mathbb{E} [ \mathcal{M} ( \frac{t}{\mx  }, Y
    )^\mx   ] )^\my.\]
Hence, by \eqref{e:markov},
  \[ \frac{1}{\my  } \log \{ \Q ( \frac{1}{\my\mx}  \sum_{k = 1}^\my  
     \sum_{l = 1}^\mx   \varphi (X^{k, l}, Y^k) \geq a ) \} \leq - ta
     + \log ( \mathbb{E} [ \mathcal{M} ( \frac{t}{\mx  }, Y
     )^\mx   ] ) .\]
  The inequality being true for arbitrary $t > 0$, taking the infimum over $t > 0$ on the
  RHS yields
  \[ \frac{1}{\my  } \log \big( \Q ( \frac{1}{\my\mx}  \sum_{k = 1}^\my  
     \sum_{l = 1}^\mx   \varphi (X^{k, l}, Y^k) \geq a ) \big) \leq -
     \sup_{t > 0} \big( ta - \log ( \mathbb{E} [ \mathcal{M}
     ( \frac{t}{\mx  }, Y )^\mx   ] ) \big) .\]
  In order to establish \eqref{e:fench}, it remains to show that $t \mapsto \log ( \mathbb{E} [ \mathcal{M}
  ( \frac{t}{\mx  }, Y )^\mx   ] )$ is convex and that $I_\mx   (a)
  = \sup_{t > 0} \{ ta - \log ( \mathbb{E} [ \mathcal{M} (
  \frac{t}{\mx  }, Y )^\mx   ] ) \}$. 
  
Define $\Lambda (t)
  \coloneqq \log ( \mathbb{E} [ \mathcal{M} ( \frac{t}{\mx  }, Y
  )^\mx   ] )$. As a moment generating function is infinitely differentiable on its domain of definition,
$\Lambda$ is infinitely differentiable. After computations we get $\Lambda' (0) =\mathbb{E} [\varphi
  (X, Y)]$ and $\Lambda'' (t) = \frac{\det (\mathbb{E} [A])}{\mathbb{E} [\mathbb{E}
     [V|Y]^\mx  ]^2}$
for the $2\times2$ random matrix  
  \[ A = \left[\begin{array}{c|c}
       \frac{1}{\mx  } \mathbb{E} [U^2 V|Y] \mathbb{E} [V|Y]^{\mx   - 1} + ( 1 -
       \frac{1}{\mx  } ) \mathbb{E} [UV|Y]^2 \mathbb{E} [V|Y]^{\mx   - 2} &
       \mathbb{E} [UV|Y] \mathbb{E} [V|Y]^{\mx   - 1}\\
       \hline
       \mathbb{E} [UV|Y] \mathbb{E} [V|Y]^{\mx   - 1} & \mathbb{E} [V|Y]^\mx  
     \end{array}\right] ,\]
where $U \coloneqq \varphi (X, Y)$ and $V \coloneqq \exp ( \frac{t}{\mx  }
  \varphi (X, Y) )$. We have:
  \[ A =\mathbb{E} [V|Y]^{\mx   - 2}  \left[\begin{array}{c|c}
       \frac{1}{\mx  } \mathbb{E} [U^2 V|Y] \mathbb{E} [V|Y] + ( 1 -
       \frac{1}{\mx  } ) \mathbb{E} [UV|Y]^2 & \mathbb{E} [UV|Y] \mathbb{E}
       [V|Y]\\
       \hline
       \mathbb{E} [UV|Y] \mathbb{E} [V|Y] & \mathbb{E} [V|Y]^2
     \end{array}\right] \]
  Let $\alpha, \beta \in \mathbb{R}$ and 
  \[\begin{aligned}
  \Delta_{\alpha, \beta} &\coloneqq \frac{1}{\mathbb{E} [V|Y]^{\mx   - 2}} [\alpha, \beta] A \left[\begin{aligned}\alpha\\\beta\end{aligned}\right] \\
  &=\alpha^2  ( \frac{1}{\mx  } \mathbb{E}
     [U^2 V|Y] \mathbb{E} [V|Y] + ( 1 - \frac{1}{\mx  } ) \mathbb{E}
     [UV|Y]^2 ) + \beta^2 \mathbb{E} [V|Y]^2 + 2 \alpha \beta \mathbb{E}
     [UV|Y] \mathbb{E} [V|Y].
  \end{aligned}\]
  From the Cauchy-Schwarz inequality, we have:
  \begin{equation}
    \label{ineq_cs} \mathbb{E} [U^2 V|Y] \mathbb{E} [V|Y] \geq \mathbb{E}
    [UV|Y]^2 .
  \end{equation}
  We then have:
  \begin{eqnarray}
    \Delta_{\alpha, \beta} & \geq & \alpha^2 \mathbb{E} [UV|Y]^2 + \beta^2
    \mathbb{E} [V|Y]^2 + 2 \alpha \beta \mathbb{E} [UV|Y] \mathbb{E} [V|Y]
    \nonumber\\
    & = & (\alpha \mathbb{E} [UV|Y] + \beta \mathbb{E} [V|Y])^2  \geq  0. \nonumber
  \end{eqnarray}
  Furthermore, we have $\Delta_{\alpha, \beta} > 0$ because $\Delta_{\alpha, \beta} = 0$ would imply equality in
  \eqref{ineq_cs}, which in turn is only attained when $U = 1$ \tmtextit{a.s}., contradicting
  the {non-degeneracy assumption made on $\varphi$}.
Therefore $A$ is a.s. {positive definite}. Hence $\mathbb{E} [A]$ is {positive definite}, i.e.
  $\det (\mathbb{E} [A]) > 0$. In conclusion, $\Lambda$ is {strictly}
  convex.
  
  Let $\psi (t) \coloneqq ta - \Lambda (t)$. For $a >\mathbb{E} [\varphi (X,
  Y)]$, we have
  \[ \psi' (0) = a - \Lambda' (0) = a -\mathbb{E} [\varphi (X, Y)] > 0. \]
  Therefore there exists some $\varepsilon > 0$ such that $\psi' (t) > 0$ for all
  $t \in (0, \varepsilon)$. Hence, for all $t \in (0, \varepsilon)$, we have
  $ta - \Lambda (t) > 0$.
  
  On the other hand, we have:
  \[ \sup_{t < 0} \{ ta - \Lambda (t) \} = \sup_{t < 0} \{ \underbrace{t (a
     -\mathbb{E} [\varphi (X, Y)])}_{< 0} + t\mathbb{E} [\varphi (X, Y)] -
     \Lambda (t) \}. \]
  Using convexity and concavity inequalities, we have
  \begin{eqnarray}
    \Lambda (t) & = & \log ( \mathbb{E} [ \mathcal{M} (
    \frac{t}{\mx  }, Y )^\mx   ] ) \nonumber\\
    & \geq & \mx   \log ( \mathbb{E} [ \mathcal{M} ( \frac{t}{\mx  },
    Y ) ] ) \nonumber\\
    & \geq & t\mathbb{E} [\varphi (X, Y)] .\nonumber
  \end{eqnarray}
  We then obtain that
  \[ \sup_{t < 0} \{ ta - \Lambda (t) \} \leq 0 .\]
  Thus
  \[ \sup_{t > 0} \{ ta - \log ( \mathbb{E} [ \mathcal{M}
     ( \frac{t}{\mx  }, Y )^\mx   ] ) \} = \sup_{t \in
     \mathbb{R}} \{ ta - \log ( \mathbb{E} [ \mathcal{M} (
     \frac{t}{\mx  }, Y )^\mx   ] ) \} = I_\mx   (a), \]
which finishes to prove \eqref{e:fench}.
As the Fenchel conjugate of the twice differentiable and strictly convex function $\Lambda$, $I_\mx  $ is twice differentiable and we have
  \begin{eqnarray}
    I_\mx   (\mathbb{E} [\varphi (X, Y)]) & = & - \Lambda (0) = 0 \nonumber\\
    I_\mx  ' (\mathbb{E} [\varphi (X, Y)]) & = & 0 \nonumber\\
    I_\mx  '' (\mathbb{E} [\varphi (X, Y)]) & = & \frac{1}{\Lambda'' (0)} =
    \frac{1}{\frac{1}{\mx  } \mathbb{E} [\tmop{Var} (\varphi (X, Y) |Y)] +
    \tmop{Var} (\mathbb{E} [\varphi (X, Y) |Y])} .\nonumber
  \end{eqnarray}
  Hence a Taylor expansion around $\mathbb{E} [\varphi (X, Y)]$ gives \eqref{e:taylor}.~\finproof\\
\end{proof}

\brem
The Taylor approximation 
for the Fenchel conjugate in Lemma \ref{largedev_oversim} would become exact
if we added the hypothesis that $\varphi (X, Y)$ is Gaussian conditionally on
$Y$ and that $\mathbb{E} [\varphi (X, Y) |Y]$ is Gaussian. We can
still obtain a similar expression as an exact lower-bound if we just assume
sub-Gaussianity as in Theorems \ref{bound_finite}-\ref{bound_infinite}.
\erem

\subsection{Proof of Theorem \ref{bound_finite}}\label{proof_bound_finite}

Under the assumptions of Theorem \ref{bound_finite}, we have in view of \eqref{e:theevent}:
\begin{equation}
  \label{prob_deltasub1} \Q ( \hat{S}^{\delta}_{\my  , \mx  } (E)
  \not\subset S^{\epsilon} (E) ) \leq \sum_{\theta \in E \setminus
  S^{\epsilon}(E) } \Q ( \bigcap_{\theta' \in E} \{ \hat{G}_{\my  , \mx  }
  (\theta) \leq \hat{G}_{\my  , \mx  } (\theta') + \delta \} )
\end{equation}
Let us consider a minimizer $\theta^\star$ of $G$ over $E$, hence
\begin{equation}
  \label{defmin}\forall \theta \in E \setminus S^{\epsilon}, G   (\theta^\star) < G
   (\theta) - \epsilon . \end{equation}
We then have:
\begin{equation}
  \label{prob_deltasub2} \Q ( \hat{S}^{\delta}_{\my  , \mx  } (E)
  \not\subset S^{\epsilon} (E) ) \leq \sum_{\theta \in E \setminus
  S^{\epsilon}
(E)} \Q (\hat{G}_{\my  , \mx  } (\theta) \leq \hat{G}_{\my  , \mx  }   (\theta^\star) + \delta)
\end{equation}
Define:
\[ \hat{\Gamma}_{\my  , \mx  } (\theta) \coloneqq \frac{1}{\my\mx}  \sum_{k = 1}^\my   \sum_{l =
   1}^\mx   \Gamma ( \theta^\star , \theta, X^{k,l}, Y^k) = \hat{G}_{\my  , \mx  }  
 (\theta^\star) - \hat{G}_{\my  , \mx  } (\theta) \]
Inequality \eqref{prob_deltasub2} can then be rewritten:
\begin{equation}
  \label{prob_deltasub3} \Q ( \hat{S}^{\delta}_{\my  , \mx  }
  \not\subset S^{\epsilon}(E) ) \leq \sum_{\theta \in 
  E
  \setminus
  S^{\epsilon}(E) }\Q (\hat{\Gamma}_{\my  , \mx  } (\theta) \geq - \delta).
\end{equation}

Now observe that
\begin{equation}
\label{eq:mean_leq_minusdelta}
\mathbb{E} [\Gamma ( \theta^\star , \theta, X, Y)] = G  (\theta^\star) - G (\theta)  < - \epsilon < - \delta
\end{equation}
for all $\theta \in E$. Hence, from inequality \eqref{prob_deltasub3} and Lemma
\ref{largedev_oversim}, we obtain
\begin{eqnarray}
  \Q ( \hat{S}^{\delta}_{\my  , \mx  } (E) \not\subset S^{\epsilon} (E)
  ) & \leq & | E | \max_{\theta \in E \setminus S^{\epsilon}(E)} \{
  \Q (\hat{\Gamma}_{\my  , \mx  } (\theta) \geq - \delta) \} \nonumber\\
  & \leq & | E | \exp (- \my   \min_{\theta \in E \setminus S^{\epsilon}(E)} \{
  I^{\theta}_\mx   (- \delta) \})  \label{prob_deltasub4},
\end{eqnarray}
where $I^{\theta}_\mx  $ is the Fenchel conjugate of $\log ( \mathbb{E}
[ \mathcal{M}^{\theta} ( \frac{t}{\mx  }, Y )^\mx   ] )$
and $\mathcal{M}^{\theta} (z, y) \coloneqq \mathbb{E} [\exp (z \Gamma ( 
\theta^\star, \theta, X, Y)) |Y = y]$ for every $\theta \in E$. 
From inequalities
\eqref{subgaussian1} and \eqref{subgaussian2}, we get for every $\theta \in E
\setminus S^{\epsilon}(E)$ and $t \in \mathbb{R}$:
\[ \log ( \mathbb{E} [ \mathcal{M}^{\theta} ( \frac{t}{\mx  }, Y
   )^\mx   ] ) \leq \mathbb{E} [\Gamma ( \theta^\star , \theta, X,
   Y)] t + \frac{1}{2}  ( \frac{b_1^2}{\mx  } + b_2^2 ) t^2 \]
Thus, for every $\theta \in E \setminus S^{\epsilon}(E)$, we have:
\begin{eqnarray}
  I_\mx  ^{\theta} (- \delta) & = & \sup_{t \in \mathbb{R}} \{ t (- \delta -\mathbb{E}
  [\Gamma ( \theta^\star , \theta, X, Y)]) +\mathbb{E} [\Gamma ( \theta^\star , \theta,
  X, Y)] t - \log ( \mathbb{E} [ \mathcal{M}^{\theta} ( \frac{t}{\mx  }, Y
  )^\mx   ] ) \} \nonumber\\
  & \geq & \sup_{t \in \mathbb{R}} \{ t (- \delta -\mathbb{E} [\Gamma ( \theta^\star ,
  \theta, \xi, \Gamma)]) - \frac{1}{2}  ( \frac{b_1^2}{\mx  } + b_2^2 )
  t^2 \} \nonumber\\
  & = & \frac{(- \delta -\mathbb{E} [\Gamma ( \theta^\star , \theta, \xi,
  \Gamma)])^2}{2 ( \frac{b_1^2}{\mx  } + b_2^2 )} \nonumber\\
  & > & \frac{(\epsilon - \delta)^2}{2 ( \frac{b_1^2}{\mx  } + b_2^2
  )}  \label{legendre_lowerbound}
\end{eqnarray}
where the last inequality comes from \eqref{eq:mean_leq_minusdelta}.
This yields the result.

\subsection{Proof of Theorem
\ref{bound_infinite}}\label{proof_bound_infinite}

From the Lipschitz assumption \eqref{lipschitz}, it follows that for all
$\theta, \theta' \in \Theta$:
\[ | \hat{G}_{\my  , \mx  } (\theta) - \hat{G}_{\my  , \mx  } (\theta') | \leq
   \hat{L}_{\my  , \mx  }  \| \theta - \theta' \| \quad \text{a.s.} \]
and
\[ | G (\theta) - G (\theta') | \leq \mathbb{E} [L(X,
   Y)]  \| \theta - \theta' \| \]
where
\[ \hat{L}_{\my  , \mx  } \coloneqq \frac{1}{\my\mx}  \sum_{k = 1}^\my   \sum_{l = 1}^\mx  
   L(X^{k, l}, Y^k) .\]

Let $0 < \delta' < \epsilon'$ and let $\Theta' \coloneqq \{
\theta_1, \ldots, \theta_C \}$ be a minimal $\varrho$-covering of $\Theta$, for a given $\varrho>0$. We then have (cf.~\citeN[Corollary~4.2.13 p.85]{vershynin2018high}):
\[ C \leq 
    ( \frac{2 D}{\varrho} + 1)^d .\]
 Let  $\tilde{\Theta} \coloneqq \Theta' \cup \{\theta^\star\}$,
where  $\theta^\star\in \op{\tmop{Argmin}_{\theta \in \Theta} G (\theta)}$.
 We have $| \tilde{\Theta} | \leq  ( \frac{2 D}{\varrho} + 1
)^d+1 .$ Theorem
\ref{bound_finite} yields:
\[ {\Q ( \hat{S}^{\delta'}_{\my  , \mx  } (\tilde{\Theta})
   \not\subset S^{\epsilon'} (\tilde{\Theta}) ) <  \big( ( \frac{2
   D}{\varrho} +1 )^d+1\big) \exp ( \frac{- \my   (\epsilon' - \delta')^2}{2 (
   \frac{b_1^2}{\mx  } + b_2^2 )} )}. \]
Our next goal is to show the following assertion for suitable choices of
$\delta'$ and $\epsilon'$ and for any $L' > \bar{L}$ (note that  
$\bar{L}<\infty$):	
\begin{equation}
  \label{reduced_to_unreduced} \{\hat{S}^{\delta}_{\my  , \mx  } \not\subset
  S^{\epsilon}\} \cap \{L' \geq \hat{L}_{\my  , \mx  }\} \subset
  \{\hat{S}^{\delta'}_{\my  , \mx  } (\tilde{\Theta}) \not\subset S^{\epsilon'}
  (\tilde{\Theta})\}.
\end{equation}
Let $L' > \bar{L}$ and assume that $\hat{S}^{\delta}_{\my  , \mx  } \not\subset
S^{\epsilon}$ and $L' \geq \hat{L}_{\my  , \mx  }$, and that
$\hat{S}^{\delta'}_{\my  , \mx  } (\tilde{\Theta}) \subset S^{\epsilon'}
(\tilde{\Theta})$. In particular, there exists $\theta \in \Theta$ such that
$\hat{G}_{\my  , \mx  } (\theta) \leq \min_{\Theta} \hat{G}_{\my , \mx  } + \delta$ and $G
(\theta) > \min_{\Theta} G + \epsilon$. Let then $\theta' \subset
\tilde{\Theta}$ such that $\| \theta - \theta' \| < \varrho$. We have:
\[ \hat{G}_{\my  , \mx  } (\theta') \leq \hat{G}_{\my  , \mx  } (\theta) + L' \varrho \leq
   \min_{\Theta} \hat{G}_{\my , \mx  } + \delta + L' \varrho. \]
Thus, if we {choose $\delta' = \delta + L' \varrho$}, then we have
$\hat{G}_{\my  , \mx  } (\theta') \leq \min_{\Theta} \hat{G}_{\my , \mx  } + \delta'$,
and consequently $\theta' \in \hat{S}^{\delta'}_{\my  , \mx  } (\tilde{\Theta})$ (as $\tilde{\Theta}\subset \Theta$).
Hence, by our assumption $\hat{S}^{\delta'}_{\my  , \mx  } (\tilde{\Theta}) \subset S^{\epsilon'}
(\tilde{\Theta})$, we get that $\theta' \in S^{\epsilon'}
(\tilde{\Theta})$. Thus:
\[ G (\theta) \leq G (\theta') + L' \varrho \leq  \min_{\widetilde{\Theta}} G +
   \epsilon' + L' \varrho  = \min_{\Theta} G +
   \epsilon' + L' \varrho ,\]
   as   $ \min_{\widetilde{\Theta}} G
= \min_{ \Theta }G$ (since $\theta^\star\in \tilde{\Theta}$).
Hence, if we {also choose $\epsilon' = \epsilon - L' \varrho$ with
$\varrho$ such that $\varrho < \frac{\epsilon - \delta}{2 L'}$ in order to
ensure that $0 < \delta' < \epsilon'$}, then $G (\theta) \leq
\min_{\Theta} G + \epsilon$,
 which contradicts our assumption 
 and proves 
\eqref{reduced_to_unreduced}. Thus 
\[ \Q ( \{\hat{S}^{\delta}_{\my  , \mx  } \not\subset S^{\epsilon}\} \cap
   \{L' \geq \hat{L}_{\my  , \mx  } \}) \leq \Q (
   \hat{S}^{\delta'}_{\my  , \mx  } (\tilde{\Theta}) \not\subset S^{\epsilon'}
   (\tilde{\Theta}) ) .\]
As a consequence,
\[ \Q ( \hat{S}^{\delta}_{\my  , \mx  } \not\subset S^{\epsilon}
   ) \leq \Q ( \hat{S}^{\delta'}_{\my  , \mx  } (\tilde{\Theta})
   \not\subset S^{\epsilon'} (\tilde{\Theta}) ) +\Q
   (\hat{L}_{\my  , \mx  } > L') .\]
But applying Hoeffding's lemma on the inequalities \eqref{lipschitz_bounded1} and
\eqref{lipschitz_bounded2}, Lemma \ref{largedev_oversim}, and
proceeding similarly as in \eqref{legendre_lowerbound} to establish a lower bound
for the Legendre transform, yields
\beql{e:rhsbis} {\Q (\hat{L}_{\my  , \mx  } > L') \leq \exp (
   \frac{- \my   (L' - \bar{L})^2}{2 ( \frac{\ell_1^2}{\mx  } + \ell_2^2 )}
   )}. \eeql
Thus,
\[ \Q ( \hat{S}^{\delta}_{\my  , \mx  } \not\subset S^{\epsilon}
   ) \leq  \big( ( \frac{2 D}{\varrho}+1 )^d +1\big)\exp ( \frac{- \my  
   (\epsilon' - \delta')^2}{2 ( \frac{b_1^2}{\mx  } + b_2^2 )} )
   + \exp ( \frac{- \my   (L' - \bar{L})^2}{2 ( \frac{\ell_1^2}{\mx  } + \ell_2^2
   )} ). \]
We have $\epsilon' - \delta' = \epsilon - \delta - 2 L' \varrho$. Finally, if we
{choose $\varrho = \frac{\epsilon - \delta}{4 L'}$}, then we get
$\epsilon' - \delta' = \frac{\epsilon - \delta}{2}$ and 
\[ {\Q ( \hat{S}^{\delta}_{\my  , \mx  } \not\subset
   S^{\epsilon} ) \leq \big(  ( \frac{8 L' D}{\epsilon - \delta}+1 )^d +1\big)\exp (
   \frac{- \my   (\epsilon - \delta)^2}{8 ( \frac{b_1^2}{\mx  } + b_2^2 )}
   ) + \exp ( \frac{- \my   (L' - \bar{L})^2}{2 ( \frac{\ell_1^2}{\mx  } +
   \ell_2^2 )} )}. \]
This concludes our proof.

\section{Market and Credit Model in Continuous Time}\label{sec:mktcreditmodelcontinuous}
\def\D{d}
For every economy $e$, the short-rate $r^{\langle e\rangle}$ and the exchange rate $\chi^{\langle e\rangle}$ against the reference currency respectively follow Vasicek and log-normal dynamics  
\beql{e:sde1} 
&\D r^{\langle e\rangle}_t = a^{\langle e\rangle}(b^{\langle e\rangle}-r^{\langle e\rangle}_t) \D t + \sigma^{r, \langle e\rangle} \D \tilde{B}^{r, \langle e\rangle}_t\\
&\D \log\chi^{\langle e\rangle}_t = (r^{\langle 0\rangle}_t-r^{\langle e\rangle}_t-\frac{1}{2}|\sigma^{\chi, \langle e\rangle}|^2) \D t + \sigma^{\chi, \langle e\rangle} \D B^{\chi, \langle e\rangle}_t.\ 
\eeql
For both the bank (``$c=0$'') and every counterparty $c~(\neq 0)$, the process $\gamma^{\langle c\rangle}$ (funding spread for $c=0$ and default intensity for $c \geq 1$) follows CIR dynamics
\beql{e:sde2}
&\D \gamma^{\langle c\rangle}_t = \alpha^{\langle c\rangle}(\delta^{\langle c\rangle}-\gamma^{\langle c\rangle}_t) \D t + \nu^{\langle c\rangle} \sqrt{\gamma^{\langle c\rangle}_t} \D B^{\gamma, \langle c\rangle}_t .
\eeql
In the above, for every $e$, $\tilde{B}^{r,\langle e\rangle}$ is a $\mathbb{Q}^{\langle e\rangle}$ Brownian motion and, for every client $c$ and economy $e$, $B^{\chi,\langle e\rangle}$ and $B^{\gamma,\langle c\rangle}$ are $\mathbb{Q}^{\langle 0\rangle}$ Brownian motions. Here $\mathbb{Q}^{\langle e\rangle}$ is the risk-neutral measure corresponding to the numeraire $\exp(\int_0^{\cdot} r^{\langle e\rangle}_s \D s)$, and $a^{\langle.\rangle}$, $b^{\langle.\rangle}$, $\sigma^{., \langle.\rangle}$, $\alpha^{\langle.\rangle}$, $\delta^{\langle.\rangle}$, $\nu^{\langle.\rangle}$ are model parameters calibrated using liquid market instruments.

In line with the fundamental theorem of asset pricing, for any asset $Z$ priced in a foreign currency 
$e\geq 1$, $\exp(-\int_0^{\cdot} r^{\langle e\rangle}_s \D s) Z $ and $\exp(-\int_0^{\cdot} r^{\langle 0\rangle}_s \D s)
 \chi^{\langle e\rangle}  Z $ are martingales with respect to $\mathbb{Q}^{\langle e\rangle}$ and $\mathbb{Q}^{\langle 0\rangle}$ respectively. In particular,
\[\mathbb{E}^{\mathbb{Q}^{\langle e\rangle}}[e^{-\int_0^t r^{\langle e\rangle}_s \D s} Z_t]=Z_0=\frac{1}{\chi^{\langle e\rangle}_0}\mathbb{E}^{\mathbb{Q}^{\langle 0\rangle}}[e^{-\int_0^t r^{\langle 0\rangle}_s \D s} \chi^{\langle e\rangle}_t Z_t].\]
Thus,
\[
\begin{aligned}
  \frac{\D \mathbb{Q}^{\langle e\rangle}}{\D \mathbb{Q}^{\langle 0\rangle}}_{|t} &=\exp(\int_0^t (r^{\langle e\rangle}_s-r^{\langle 0\rangle}_s) \D s)\frac{\chi^{\langle e\rangle}_t}{\chi^{\langle e\rangle}_0} \\
&=  \exp(-\frac{1}{2}\big(\sigma^{\chi, \langle e\rangle}\big)^2 t + \sigma^{\chi, \langle e\rangle} B^{\chi, \langle e\rangle}_t).
\end{aligned}
\]
Hence, by Girsanov's theorem, if we define $B^{r, \langle e\rangle}_t$ such that:
\[\D \tilde{B}^{r, \langle e\rangle}_t = \D B^{r, \langle e\rangle}_t - \sigma^{\chi, \langle e\rangle}\D\langle B^{r, \langle e\rangle}, B^{\chi, \langle e\rangle}  \rangle_t, \]
then $B^{r, \langle e\rangle}_t$ is a $\mathbb{Q}^{\langle 0\rangle}$ Brownian motion. In particular, assuming $\D\langle B^{r, \langle e\rangle}, B^{\chi, \langle e\rangle}  \rangle_t = \rho^{\langle e\rangle} \D t$, we get the following $\mathbb{Q}^{\langle 0\rangle}$ dynamics for the short-rate of economy $e$:
\[\D r^{\langle e\rangle}_t = (a^{\langle e\rangle}(b^{\langle e\rangle}-r^{\langle e\rangle}_t)-\rho^{\langle e\rangle}\sigma^{\chi, \langle e\rangle}) \D t + \sigma^{r, \langle e\rangle} \D B^{r, \langle e\rangle}_t .\]

For every counterparty $c$, the default time $\tau^{\langle c\rangle}$ can be modeled as a the 
stopping time $\inf\{t>0; \int_0^t \gamma^{\langle c\rangle}_s \D s \geq \epsilon^{\langle c\rangle} \}$, where $\epsilon^{\langle c\rangle}$ is a standard exponential. That is, for every $t \geq 0$,
\begin{equation}\label{e:tau}\ind_{\{\tau^{\langle c\rangle}\le t\}}=1 \Leftrightarrow\tau^{\langle c\rangle} \leq t \Leftrightarrow \int_0^t \gamma^{\langle c\rangle}_s \D s \geq \epsilon^{\langle c\rangle}.\end{equation}

For the instruments, we assume a book comprised of interest rate swaps at par at inception. For each swap, we denote the set of its reset dates by $\mathcal{R}$ and by $t_-$ and $t_+$ the reset dates respectively immediately preceding and following $t$. We assume that successive reset dates are regularly spaced by $\delta$, that the swap is spot starting, i.e.~$0 \in \mathcal{R}$, and that the swap is paying fixed $\delta \Sigma$, where $\Sigma$ is the swap rate, and receiving floating $\frac{1}{\mathrm{ZC}_{t_-}(t)}-1$, where $\mathrm{ZC}_t(t')$ is the price of a zero-coupon bond\footnote{Note that the price of a zero-coupon bond has a closed-form under our affine short-rate model.} at time $t$ with maturity $t'$, at each reset date $t\in\mathcal{R}\setminus\{0\}$. Denoting by $P^{sw}_t$ the price of the swap at time $t$ in units of the underlying currency\footnote{The swap prices are then to be multiplied by the cross-currency exchange rate processes to have all prices in the same reference currency.}, we have for all $t \leq \overline{t} := \max\mathcal{R}$:
\[
P^{sw}_t = \left\{
\begin{aligned}
&\frac{\mathrm{ZC}_t(t_+)}{\mathrm{ZC}_{t_-}(t_+)} - \mathrm{ZC}_t(\overline{t}) - \delta \Sigma \sum_{t' \in \mathcal{R}, t' > t} \mathrm{ZC}_t(t') &\text{ if }& t \notin \mathcal{R}\setminus\{0\}\\
&\frac{1}{\mathrm{ZC}_{t_-}(t)}-\mathrm{ZC}_t(\overline{t})-\delta \Sigma (1+\sum_{t' \in \mathcal{R}, t' > t}\mathrm{ZC}_t(t')) &\text{ if }& t \in \mathcal{R}\setminus\{0\}\\
&1-\mathrm{ZC}_0(\overline{t})-\delta \Sigma \sum_{t'\in\mathcal{R}\setminus\{0\}} \mathrm{ZC}_0(t') &\text{ if }& t=0.
\end{aligned}\right.
\]
\brem\label{rem:pathdep} The path-dependence induced by the  previous reset date can be resorbed by including the short rates of that date among the risk factors $Y$.
\erem

\section{Python/CUDA Optimized Implementation Using GPU\label{ss:pycuda}}

Contrary to most use-cases of machine learning where the final product is the trained model and thus execution time is only critical during inference, in the case of learning from simulated data in pricing applications, the training process itself is part of the final product. Hence particular care is needed when writing the training procedures.

We implemented Algorithm
\ref{alg:backward_algo} using Python programming with the CUDA API (application programming interface). 
Because the considered problem involves high variances (see Section \ref{s:hierarch}) and thus requires a sufficiently large sample size, both training and inference are not easy to achieve in a reasonable execution time. First, we need to leverage the manycore parallel architecture of GPUs that involves streaming multiprocessors, which are used for the simulation, learning and inference phases. All phases are intertwined and performed inline. Hence we need to carefully optimize each part of the algorithm. 

On the simulation side, due to their intrinsically parallel nature, Monte Carlo simulations easily lend themselves to parallelization on GPUs. Nevertheless, various optimizations are needed to achieve a reasonable solution executed within a few seconds 
(cf.~Figure \ref{fig:sim_exec_time_cpu_meme_matrices} in
Section \ref{sec:cva2}).  
We chose to use Python and the CUDA kernels are compiled {\itshape just-in-time} using the module numba, which allows to dynamically generate CUDA kernels at run-time. 

Regarding learning, we opted for PyTorch for its proximity to the CUDA programming model and its {\itshape just-in-time} compiler allowing for static computation graphs and automatic fusion, whenever appropriate, of the kernels associated with the PyTorch operations used by the model.

We used most of the optimization techniques introduced in \citeN{abbas2018xva}, except those related to regressions since these are replaced here by neural networks. 
We also introduced several additional optimizations, the most important one being to judiciously manage the CPU and GPU memories. A naive solution would  involve the CPU/GPU virtual unified memory \citep{CudaCProgGuide} and let the compiler choose. However, this usually results in sub-optimal memory accesses. 
Our choice rather targets an efficient use of the GPU memory space, a reduction of CPU/GPU transfer and an optimized transfer when needed.
These optimizations and implementation choices are developed in the accompanying Github repository\footnote{\url{https://github.com/BouazzaSE/NeuralXVA}, see the coverpage of the paper.}.

Having in mind a portfolio of the order of one million trades spread over maturities ranging over 50 years and involving a few thousands of clients, the computational CPU ressources typically available in banks hardly allow computing (even overnight) a mark-to-market cube with more than $10^4$ paths. 
Switching to GPU ressources (as required anyway if training path-wise XVA metrics is envisioned) could allow computing a mark-to-market cube with $10^5$ to $10^6$ paths in about one hour of computations spread over a few GPUs.

In fact, while we performed our computations using only one GPU, we expect a bank to have access to more than just a single GPU. Monte Carlo simulations and stochastic gradient descent can easily be adapted to multi-GPU setups: see for instance \citep{abbas2014pricing} for a study of the parallelization of a Monte Carlo pricing procedure over multiple GPUs and nodes. As for training, the main parallelization issue is the ability of the optimization algorithm to scale over multiple GPUs or nodes: see in particular \citep{recht2011hogwild} for an asynchronous SGD algorithm which does not require synchronization between the different workers involved. 
Combining these approaches would allow for an implementation that can readily scale to multiple GPUs and nodes, reducing the computation times proportionally to the total number of GPU nodes that are available.

\bibliographystyle{chicago}
\bibliography{\jobname}
\eee